\documentclass[%
reprint,
superscriptaddress,
nofootinbib,
amsmath,amssymb,
aps,
pra,
]{revtex4-2}

\usepackage{graphicx}
\usepackage{dcolumn}
\usepackage{mathtools}
\usepackage{amsmath,amsfonts,amssymb,amsthm}
\usepackage{thmtools, thm-restate}
\usepackage{bbm}
\usepackage{xcolor}
\usepackage{tikz}
\usepackage{pgfplots}
\usepgfplotslibrary{fillbetween}
\pgfplotsset{compat=1.18}
\usepackage{float}

\newtheorem{theorem}{Theorem}
\newtheorem{corollary}{Corollary}[theorem]

\usepackage{hyperref}
\usepackage[capitalise]{cleveref}
\usepackage[mathlines]{lineno}

\makeatletter
\AddToHook{cmd/appendix/before}{\def\cref@section@alias{appendix}}

\AddToHook{cmd/appendix/before}{%
    \crefalias{section}{appendix}%
    \crefalias{subsection}{appendix}
}
\makeatother

\usepackage{soul}

\begin{document}
	
	\preprint{APS/123-QED}
	
	\title{Bounds on Nonlocality and Random Access Codes from\\ Extended Information Causality Principle }
	
	\author{Prabhav Jain}
    \email{prabhav.jain@tu-darmstadt.de}
	\affiliation{Department of Computer Science, Technical University of Darmstadt, 64289 Darmstadt, Germany}
	\author{Nikolai Miklin}
    \affiliation{Department of Physics, Technical University of Darmstadt, 64289 Darmstadt, Germany}
	\author{Mariami Gachechiladze}
	\affiliation{Department of Computer Science, Technical University of Darmstadt, 64289 Darmstadt, Germany}

	\date{\today}
	
	\begin{abstract}
		Information Causality was introduced as a physical principle for constraining the set of nonlocal correlations. In recent work, we proposed an extension of Information Causality that allows correlations among Alice’s inputs. This extended principle yields tighter constraints than the original formulation and recovers part of the quantum boundary in certain Bell scenarios. In this work, we further investigate the implications of extended Information Causality and apply it to scenarios beyond binary inputs and outputs. We derive a family of quantum Bell inequalities that strengthen previously known constraints on quantum correlations. Using these inequalities, we obtain an improved analytical bound for the Collins-Gisin family of Bell inequalities. We also apply Information Causality to entanglement-assisted random access codes and derive new theory-independent analytical bounds on the winning probability. For this latter task, we prove that, despite being stronger in general, the extended principle does not improve the bounds obtained from the original Information Causality principle. This suggests that the existing Information Causality bounds are optimal for this class of random access codes.
	\end{abstract}
	\maketitle
	
	
	\section{Introduction}
	
    Since the advent of quantum mechanics, several distinctive features of the theory have been identified, including superposition, uncertainty relations between observables, and entanglement, all of which mark a departure from classical physics. Bell non-locality is another such feature: it allows correlations that are stronger than those achievable by any local classical theory, including theories with hidden variables \cite{bell1964epr,brunner2014bell}. In recent years, Bell non-locality has attracted renewed interest, and increasingly stringent experiments have verified its existence while closing several potential loopholes \cite{bellexperiment1,bellexperiment2,Hensen2015}.

	Although Bell non-locality is well understood within the formalism of quantum mechanics, it is desirable to identify physical or operational principles that explain this behavior without relying on the specific mathematical structure of the theory. In \cite{popescu1994quantum}, the principle of non-signaling (NS) was introduced as a requirement imposed by causality, namely the impossibility of faster-than-light signaling. The authors showed, however, that non-signaling alone is not sufficient to single out quantum correlations: there exist broad classes of post-quantum distributions that satisfy non-signaling. This raises the question of which additional properties, beyond causality, a distribution must satisfy in order to be quantum-realizable.

    To address this question, several physical principles have been proposed, including Macroscopic Locality (ML) \cite{navascues2007bounding}, Local Orthogonality \cite{fritz2013local}, Non-Trivial Communication Complexity (NTCC) \cite{brassard2006limit}, Almost Quantum Correlations (AQC) \cite{navascus2015aqc}, and Information Causality (IC) \cite{pawlowski2009information}. The broad aim of these approaches is to recover, or at least approximate, the set of quantum correlations from empirical or operational principles alone. When several such principles are considered, a natural question is how they compare. Almost Quantum Correlations, which correspond to the $Q_{1+AB}$ level of the Navascués-Pironio-Acín hierarchy and strictly contain the quantum set, are known to satisfy all of the above principles except, potentially, Information Causality \cite{navascus2015aqc}. Thus, the relation between AQC and IC remains unresolved.

    In the original IC paper \cite{pawlowski2009information}, Information Causality was shown to reproduce Tsirelson’s bound for the Clauser-Horne-Shimony-Holt (CHSH) inequality \cite{clauser1969proposed}, as well as Uffink’s inequality \cite{uffink2002quadratic}. Later works applied IC to more general Bell scenarios, in which both parties may have an arbitrary number of measurement settings, and derived a family of quantum Bell inequalities \cite{gachechiladze2022quantum,miklin2021information}. In \cite{our_prl,miklin2026communication}, an extension of the IC principle was proposed for arbitrary distributed-computation scenarios. In the binary case, this extension was shown to be stronger than Uffink’s inequality and to recover part of the quantum boundary in a particular slice of the correlation space, where it coincides with the Tsirelson-Landau-Masanes (TLM) inequality \cite{cirel1980quantum,landau1988empirical,masanes2003necessary}. In \cite{miklin2026communication}, it was also shown that IC is stronger than NTCC, in the sense that any non-signaling distribution satisfying IC also satisfies NTCC. More recently, \cite{pollyceno2026communication} introduced a protocol-independent and task-independent approach to applying IC. In their framework, they re-derived the extended IC inequality presented in Refs.~\cite{our_prl,miklin2026communication}, which is also the inequality used throughout the present work. Their results show that this extended IC inequality can provide stronger bounds than the original IC statement in scenarios beyond random access codes, including more general Bell scenarios and communication games~\cite{ambainis2009quantumrandomaccesscodes,pawowskiearac,acin2007device}. 
    
	In this work, we further investigate the implications of the extended IC principle and derive a family of analytical quantum Bell inequalities for scenarios beyond binary measurement settings and outcomes. We show that these inequalities are strictly tighter than the previously derived family of inequalities in \cite{gachechiladze2022quantum,our_prl}, which were obtained using the original IC principle. Using the new inequalities, we obtain an improved analytical bound on the Collins-Gisin family of Bell inequalities \cite{collins2004relevant}. Thus, we demonstrate that the extended IC principle is stronger than the original formulation in more general Bell scenarios.
	 
	We also apply IC to derive upper bounds on the winning probability of entanglement-assisted random access codes (EARACs) \cite{ambainis2009quantumrandomaccesscodes,pawowskiearac}. We first study the standard EARAC scenario, recovering several known bounds from the literature, and then derive new bounds for higher-dimensional generalizations. We relate these winning probabilities to critical noise-mixing thresholds and, in the final section, consider the effect of correlations among Alice’s inputs. We prove the somewhat surprising result that, although the extended IC principle is tighter in general, it does not yield better bounds than the original IC statement for these tasks. Finally, as a corollary of the above result, we argue for the optimality of the known IC-implied bounds on EARAC winning probabilities.
    
	\section{Preliminaries}\label{sec:setup}
    In this section, we describe the Bell scenarios considered in this work and fix the notation used throughout the paper. For an integer $d\geq 2$, we denote by $[d]:=\{0,\dots,d-1\}$ the alphabet of size $d$. We refer to a variable taking values in $[d]$ as a \textit{dit}; in particular, a bit corresponds to the case $d=2$. Unless stated otherwise, the symbols $\oplus$ and $\cdot$ denote addition and multiplication modulo the relevant alphabet size.

	We consider two parties, Alice and Bob. We use the notation $m_\mathcal{A}m_\mathcal{B}n_\mathcal{A}n_\mathcal{B}$ for the Bell scenario in which Alice and Bob have $m_\mathcal{A}$ and $m_\mathcal{B}$ measurement settings, and $n_\mathcal{A}$ and $n_\mathcal{B}$ measurement outcomes, respectively. In the IC communication scenario, illustrated in \cref{fig:scenario}, Alice receives a string of $n$ dits $(a_0,\dots,a_{n-1})$, each taking values in $[d_\mathcal{A}]$, distributed according to a specified joint probability distribution. Bob receives an index $b\in[n]$, distributed uniformly at random. Alice and Bob share a non-signaling box described by conditional probabilities $\mathrm{P}(A B|\alpha\beta)$, where $\alpha\in[d_\alpha]$ and $\beta\in[d_\beta]$ are the inputs to the box, while $A,B\in[d_O]$ are its outputs. They are also allowed to use a noisy classical communication channel that maps an input message $x\in[d_C]$ to an output message $x'\in[d_C]$, where $d_C$ is the channel alphabet size. Bob's goal is to output a guess $g$ for Alice's $b$th dit $a_b$.

    We begin by recalling the original formulation of Information Causality introduced in \cite{pawlowski2009information}. For the scenario described above, the IC statement is
	\begin{align}
	\sum_{i=0}^{n-1}I(a_i;g \mid b=i)\leq \mathcal{C}. 
	\end{align}
    Here, $I(\cdot;\cdot)$ denotes Shannon mutual information and $\mathcal{C}$ is the capacity of the classical communication channel. Since quantum correlations satisfy IC, the set of non-signaling correlations satisfying IC provides an outer approximation to the quantum set \cite{pawlowski2009information}. The original formulation of IC assumes that Alice's inputs are independent. In recent work \cite{our_prl,miklin2026communication}, this assumption was relaxed, and an extension of IC was proven that applies to arbitrary distributed functions and allows correlations among Alice's inputs. 

    Let $\mathcal{A}$ and $\mathcal{B}$ be finite input sets, and consider a binary-valued function $f:\mathcal{A}\times\mathcal{B}\to\{0,1\}$ computed in a distributed manner. The extended IC statement is
    \begin{equation}\label{eq:eic}
    \sum_{i=0}^{|\mathcal{B}|-1}
    I\!\left(f(a,b^i);g\mid  b=b^i, \{f(a,b^j)\}_{j=0}^{i-1}\right)
    \leq \mathcal{C},
    \end{equation}
    where $b^0,b^1,\dots,b^{|\mathcal{B}|-1}$ is an ordering of the elements of $\mathcal{B}$.
    
    For example, consider the binary index function, $\mathrm{INDEX}_2(a,b)\coloneqq a_b$, with $\mathcal{A}=\{0,1\}^2$, $\mathcal{B}=[2]$. Substituting this function into \cref{eq:eic}, with the ordering $b^0=0$ and $b^1=1$, yields the following generalization of the original IC statement for the binary scenario:
    \begin{align}\label{eq:ic_statement_correlated}
    I(a_0;g \mid b=0)+I(a_1;g \mid b=1,a_0)\leq \mathcal{C}.
    \end{align}
    This generalization leads to a quantum Bell inequality that is stronger than the one obtained from original IC inequality in the binary $(2222)$ scenario. For convenience, we restate this inequality from \cite{our_prl}:
	\begin{equation}\label{eq:better_Uffink}
		\left((1+\epsilon)e_{00}+(1-\epsilon)e_{10}\right)^2+(1-\epsilon^2)(e_{01}-e_{11})^2\leq 4,
	\end{equation}
	where $e_{ji}\coloneqq 2\mathrm{P}(A=B \mid \alpha=j,\beta=i)-1$ are the equality biases of the shared non-signaling box, and $\epsilon$ quantifies the correlation between Alice's input bits. In particular, if $\epsilon=2\mathrm{P}(a_0=a_1)-1$, then $\epsilon=1$ corresponds to perfect correlation. In the subspace of distributions with uniform marginals for both parties, \cref{eq:better_Uffink} is tighter than Uffink's inequality \cite{uffink2002quadratic}. In the limit $\epsilon\to1$, it coincides with the TLM inequality \cite{cirel1980quantum,landau1988empirical,masanes2003necessary}, thereby recovering part of the quantum boundary \cite{our_prl}.
	
    In what follows, we generalize the above derivation to two commonly studied families of Bell scenarios. First, we consider the $nn22$ scenario, in which Alice and Bob each have $n$ measurement settings and binary outcomes. Second, we consider the $d2dd$ scenario, in which Alice has $d$ measurement settings, Bob has two measurement settings, and both parties have $d$ outcomes.
	\begin{figure}
		\centering
		\includegraphics[width=0.8\linewidth]{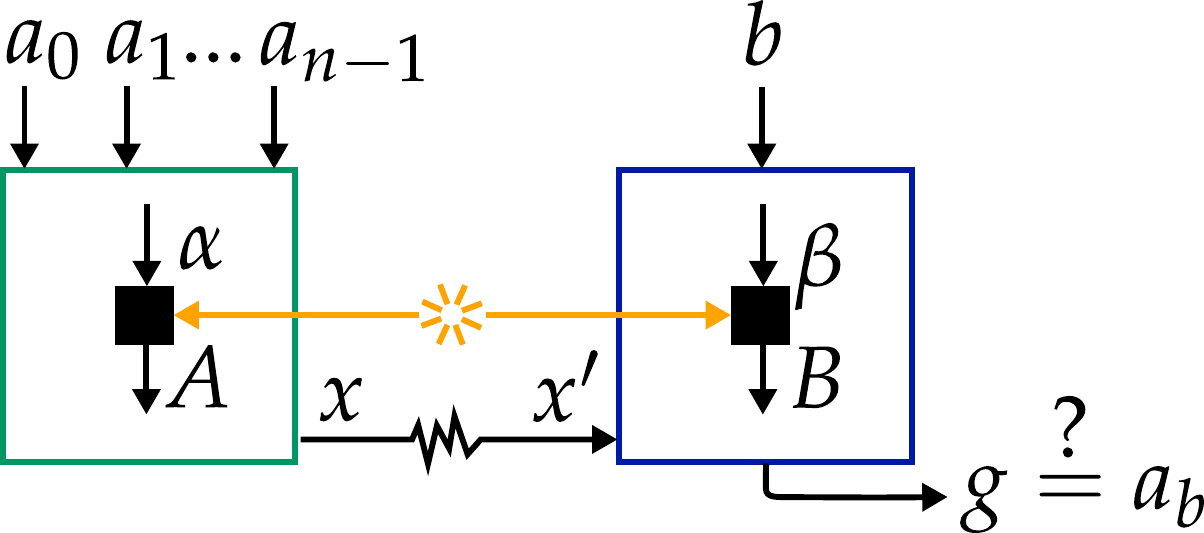}
		\caption{The communication scenario. Alice receives a string of dits $(a_0,\dots,a_{n-1})$, while Bob receives an index $b\in[n]$. Alice and Bob share a non-signaling box and may communicate through a noisy classical channel. Bob's goal is to output a guess for Alice's dit $a_b$.}
		\label{fig:scenario}
	\end{figure}
	
\section{Tighter Quantum Bell Inequalities in the $nn22$ Scenario}\label{sec:nn22}
In the $nn22$ scenario, a family of quantum Bell inequalities generalizing Uffink's inequality was derived from the original IC principle in \cite{our_prl}. In this section, we extend that result and derive new, tighter quantum Bell inequalities based on the extended IC statement. We choose the index function $\mathrm{INDEX}_n(a,b)\coloneqq a_b$, with $\mathcal{A}=\{0,1\}^n$ and $\mathcal{B}=[n]$. Substituting this function into \cref{eq:eic} with the natural ordering of $[n]$, the extended IC statement becomes
\begin{align}\label{eq:ic_corr_nn22}
    \sum_{i=0}^{n-1} I(a_i;g \mid   b=i, \{a_j\}_{j<i})\leq \mathcal{C}. 
\end{align}
    
To derive the quantum Bell inequalities, we follow the procedure described in \cite{our_prl}. We use van Dam's protocol \cite{vandam} and compute the joint distribution of Alice's bits and Bob's guess. From this distribution, we obtain the guessing probability for each bit and then evaluate the mutual information terms in \cref{eq:ic_corr_nn22}. To extract quadratic inequalities from these mutual-information terms, we use the technique outlined in \cite{our_prl}. In the limit in which the capacity of the classical communication channel tends to zero, both sides of \cref{eq:ic_corr_nn22} vanish. Hence, this limit can be evaluated by applying L'Hospital's rule twice, yielding quadratic constraints from the entropic inequality. We state the resulting constraints below.

\begin{restatable}{theorem}{thmnn}
\label{thm:nn22_ineq}
Any non-signaling theory satisfying the extended IC principle complies with the following inequalities
\begin{equation}\label{eq:nn22_ineq}
     \sum_{i=0}^{n-1}2^{i}\sum_{k_{j<i}}\left(\sum_{k_i}(-1)^{k_i}\sum_{k_{j> i}}(-1)^{h(\vec{k})}e_{f(\vec{k}),i}\right)^2\leq 4^n,
\end{equation}
for any functions $f:\{0,1\}^n\to[n]$ and $h:\{0,1\}^n\to\{0,1\}$. Here
 $e_{j,i}\coloneqq 2\mathrm{P}(A=B \mid \alpha=j,\beta=i)-1$, for all $i,j\in [n]$, and $k_i$ denotes the $i$-th component of the $n$-dimensional binary vector $\vec{k}$.
\end{restatable}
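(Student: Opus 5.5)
We derive \cref{eq:nn22_ineq} from \cref{eq:ic_corr_nn22} by choosing a protocol and letting the communication capacity go to zero. Alice's $n$ bits are correlated through a hidden binary string $\vec{k}\in\{0,1\}^n$. We take $\vec{k}$ uniform and set $a_i = k_i \oplus h(\vec{k})\cdot\delta$-type noise. More concretely, Alice's bits are distributed so that, conditioned on $a_{<i}$, the bit $a_i$ is uniform. Following van Dam's protocol \cite{vandam}, Alice uses the function $f(\vec{k})$ to select which of her box settings to use. She inputs $\alpha=f(\vec{k})$, obtains $A$, and sends $x = A\oplus h(\vec{k})\oplus(\text{parity encoding of } \vec k)$ through a binary symmetric channel with bias $\eta$, whose capacity is $\mathcal{C}\approx \eta^2/(2\ln 2)$. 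Bob, given $b=i$, inputs $\beta=i$ and guesses $g=x'\oplus B$. The precise encoding will be fixed so that the bias of Bob's guess about $a_i$ becomes the signed sum appearing in the inner bracket: sign $(-1)^{k_i}$ for the target bit, sign $(-1)^{h(\vec k)}$ for the remaining free bits $k_{j>i}$, and the conditioning bits $k_{j<i}$ fixed.

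\textbf{Step 1.} Compute the joint distribution of $(a_i,g)$ conditioned on $b=i$ and on $\{a_j\}_{j<i}$, using the fact that the box enters only through the biases $e_{f(\vec{k}),i}$. Marginalizing over $k_{j>i}$ gives, for each fixed prefix $k_{j<i}$, a binary channel from $a_i$ to $g$ with correlation
\[
E_i(k_{<i}) = \eta\, 2^{-(n-i)} \sum_{k_i}(-1)^{k_i}\sum_{k_{j>i}}(-1)^{h(\vec k)} e_{f(\vec k),i}.
\]

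\textbf{Step 2.} For small bias, the conditional mutual information is $I\approx \frac{1}{2\ln 2}\,2^{-i}\sum_{k_{<i}}E_i(k_{<i})^2$, where the average is over the $2^i$ uniform prefixes. This is the step where L'Hospital's rule is applied twice as $\eta\to0$, as described before the theorem.

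\textbf{Step 3.} Substitute into \cref{eq:ic_corr_nn22}, divide by $\eta^2/(2\ln 2)$, and take the limit. This yields
\[
\sum_i 2^{-i}\,4^{-(n-i)}\sum_{k_{<i}}(\cdots)^2\le 1.
\]
Multiplying by $4^n$ gives the weight $2^{-i}4^{i}=2^i$, matching \cref{eq:nn22_ineq}.

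\textbf{Main obstacle.} The main difficulty is designing the encoding, meaning how Alice's message depends on $\vec k$ through $h$ and how the $a_i$ relate to $\vec k$. It must produce exactly these signed sums, and conditioning on earlier bits must reveal the prefix $k_{<i}$ while leaving $k_i$ uniform and the suffix $k_{>i}$ marginalized. My first attempt would be the simplest choice, $a_i=k_i$ with uniform $\vec k$: condition on $a_{<i}=k_{<i}$ and set $x=A\oplus h(\vec k)$. I would then verify that the sign structure in Step 1 comes out right. If this choice fails, I would adjust Bob's decoding so that the $h$-dependent sign is absorbed.
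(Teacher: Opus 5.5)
Your proposal is correct and follows the paper's proof. The ``simplest choice'' in your last paragraph is exactly the paper's protocol: $\vec a=\vec k$ uniform, $\alpha=f(\vec k)$, $\beta=b$, $x=A\oplus h(\vec k)$, $g=x'\oplus B$. It works without adjustment, since it gives $\mathrm{P}(g=j\mid\vec a=\vec k,b=i)=\frac{1}{2}\bigl(1+(-1)^{h(\vec k)\oplus j}\eta\,e_{f(\vec k),i}\bigr)$, so the hidden-string and parity-encoding ideas of your first paragraph can be dropped. The remaining difference is cosmetic: you expand each conditional mutual information directly as $E_i^2/(2\ln 2)$, which holds for any binary channel with uniform input, whereas the paper expands $H(g\mid\mathbb{A}_i,b=i)$ and $H(g\mid\mathbb{A}_{i+1},b=i)$ separately and recombines them via $2(u^2+v^2)-(u+v)^2=(u-v)^2$ applied to the two partial sums at $k_i=0,1$, which is the same fact.
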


Here and below, $\sum_{k_{j<i}}$ denotes summation over all assignments of the variables $\{k_j:j<i\}$, and $\sum_{k_{j>i}}$ is defined analogously.
See \cref{app:nn22_ineqs} for a proof of the above theorem.

We now use these inequalities to upper-bound the value of the $I_{nn22}$ expressions introduced in \cite{collins2004relevant}. In \cite{our_prl}, it was shown that the maximal quantum value of $I_{nn22}$ is upper-bounded by $\frac{n-3}{2}+\sqrt{\frac{1}{3}+\frac{8}{3\cdot 4^n}}$, which is lower than the non-signaling value of $\frac{n-1}{2}$ by approximately $1-\sqrt{\frac{1}{3}}$ for large $n$. That derivation used a quantum Bell inequality obtained from standard IC for a specific protocol. Applying \cref{thm:nn22_ineq} to the same protocol yields a tighter inequality for the scenario and leads to the improved upper bound stated in the following theorem.
\begin{restatable}{theorem}{thmInn}\label{thm:Inn22}
    For any $n\geq 3$, the value of the $I_{nn22}$ expression achieved by any non-signaling box with uniform marginals and satisfying the extended Information Causality principle is upper-bounded by
    \begin{equation}
        \frac{n-4}{2}+\sqrt{\frac{1}{2}+\frac{1}{2^{n-2}}}.
    \end{equation}
\end{restatable}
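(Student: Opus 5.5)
We would reduce \cref{thm:Inn22} to a single instance of \cref{thm:nn22_ineq} plus the trivial non-signaling bounds $|e_{j,i}|\leq 1$.

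\textbf{Step 1: rewrite $I_{nn22}$ in correlators.} Using uniform marginals, every local marginal term of the Collins--Gisin expression $I_{nn22}$ equals $1/2$, and every joint probability can be written as $\mathrm{P}(A=B\mid \alpha=j,\beta=i)=(1+e_{j,i})/2$. This expresses $I_{nn22}$ as a constant plus a linear form $\tfrac14\sum_{i,j}c_{j,i}e_{j,i}$. Here $c_{j,i}=+1$ on the ``upper-left staircase'' of settings, $c_{j,i}=-1$ on the anti-diagonal, and $c_{j,i}=0$ elsewhere. We would fix the constant by checking that the deterministic non-signaling assignment $e_{j,i}=c_{j,i}$ gives $\frac{n-1}{2}$.

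\textbf{Step 2: isolate the non-trivially bounded part.} The target bound is $\frac{n-4}{2}+\sqrt{\cdot}$, i.e.\ two units below the non-signaling value $\frac{n-1}{2}$ plus a square root. This suggests bounding most of the staircase terms by $1$ each via $|e_{j,i}|\le 1$, which produces the $\frac{n-4}{2}$ part. A small residual linear combination $L$ of correlators, living on the last few settings and including the anti-diagonal terms, is then left for the information-causality constraint. Following the protocol of \cite{our_prl}, we would choose $f$ and $h$ in \cref{eq:nn22_ineq} as follows. The map $f(\vec k)$ selects, for each bit-string, the Alice setting paired with Bob's setting $i$ along the staircase, for instance via the position of the first (or last) nonzero $k_j$. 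The map $h$ supplies the signs so that the inner sums $\sum_{k_i}(-1)^{k_i}\sum_{k_{j>i}}(-1)^{h}e_{f,i}$ reproduce the differences and sums appearing in $L$. This yields a quadratic constraint $\sum_i 2^i Q_i^2\le 4^n$, where the $Q_i$ are linear forms in the $e_{j,i}$.

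\textbf{Step 3: optimize.} We would maximize $L$ subject to this quadratic constraint and $|e_{j,i}|\le 1$. Most $Q_i$ will contain variables that are already saturated at $\pm1$, so their squares contribute fixed constants to the left-hand side. What remains is essentially an ellipse constraint on two effective variables, $\alpha x^2+\beta y^2\le\gamma$. Maximizing $x+y$ (or a weighted version) on it, via Cauchy--Schwarz or a Lagrange multiplier, gives $\sqrt{\gamma(1/\alpha+1/\beta)}$. With the weights $2^i$, the $1/2^{n-2}$ correction should come from the fixed saturated terms of lower index.

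\textbf{Main obstacle.} The hard part is Steps 2--3 jointly: choosing $f$ and $h$, and choosing which correlators to saturate, so that the resulting bound is exactly $\sqrt{\tfrac12+2^{2-n}}$. One must also verify that the saturated values are simultaneously optimal rather than merely feasible. We would first guess the optimizer by symmetry and check the case $n=3$ numerically. We would then confirm via the KKT conditions that relaxing any saturated $|e_{j,i}|=1$ cannot increase the objective. Finally, we would need the condition $n\ge 3$ so that enough staircase terms exist to decouple.
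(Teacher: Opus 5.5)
Your outline follows the paper's route. The paper also feeds the protocol of \cite{our_prl} into \cref{thm:nn22_ineq} and maximizes the linear part of $I_{nn22}$ under the resulting quadratic constraint and $|e_{j,i}|\le 1$. The gap is that Steps 2--3, as described, do not give an upper bound, and the step that does is left as a to-do.

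If you bound the staircase correlators by $1$ and treat the residual $L$ separately with the IC constraint, the argument collapses. Those correlators enter the quadratic forms with weights up to $2^{n-2}$ and can be used to cancel the residual. With the paper's constraint the residual is $L=e_{n-1,0}-e_{n-1,1}-e_{n-2,2}$. Setting $e_{n-1,0}=1$, $e_{n-1,1}=e_{n-2,2}=-1$, and every other entry of $S_0,S_1,S_2$ to $-1$ gives $S_0=S_1=S_2=0$. The constraint then holds automatically, since the blocks $i\ge3$ contribute at most $2^{2n-1}-2^{n+2}<4^n$. So $L$ reaches $3$ and you recover only the non-signaling value $\frac{n-1}{2}$.

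If instead you freeze the other correlators at $\pm1$ (your Step 3), you compute the maximum on one face. That is a lower bound on the true maximum, not an upper bound. What makes it a proof is a joint certificate. The program is convex: a positive combination of squared linear forms bounded above, intersected with a box, with Slater point $e=0$. So KKT at the guessed point suffices, and the real content is that the multipliers of the active box constraints are nonnegative. The paper writes this as the identity
\[
\mathcal B(n)-\mathcal L=\sum_{i}\sum_{j\le n-i-1}a_{ji}(1-e_{j,i})+\sum_{i\ge1}b_i(1-e_{n-i,i})+\lambda\sum_i c_i(S_i-t_i)^2+\lambda\Bigl(4^n-\sum_i c_iS_i^2\Bigr),
\]
with the anti-diagonal signs flipped, and checks $a_{ji},b_i,\lambda\ge0$. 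Your proposal never reaches these multipliers.

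For the record, here are the concrete inputs you left open.
\begin{itemize}
\item The protocol is $f(\vec a)=n-1-\sum_{i=1}^{n-1}\prod_{j=1}^{i}(a_0\oplus a_j)$, i.e.\ $f=n-m$ for the first $m\ge1$ with $a_m=a_0$ (and $f=0$ if none), with $h(\vec a)=a_0$. It yields \cref{thm:special_ineq}: $S_0^2+\sum_{i=1}^{n-1}2^{i-1}S_i^2\le 4^n$.
\item The sign pattern is not supplied by $h$. $h$ only turns the $i=0$ block into the plain sum $\sum_{\vec k}e_{f(\vec k),0}$ and drops out of the other blocks on squaring. The minus sign on $e_{n-i,i}$ comes from the factor $(-1)^{k_i}$.
\item At an optimal point every correlator is saturated except $e_{n-1,0}$, $e_{n-1,1}$ and $e_{n-2,2}$. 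All blocks $S_i$ with $i\ge3$ sit at $2^{n-i+1}$ and contribute $2^{2n-1}-2^{n+2}$. So the $2^{2-n}$ correction comes from the higher-index blocks, not the lower ones.
\item On that face the constraint becomes $u_0^2+u_1^2+\tfrac12u_2^2\le 2+2^{4-n}$, with $u_0=1+e_{n-1,0}$, $u_1=1-e_{n-1,1}$, $u_2=1-e_{n-2,2}$. Cauchy--Schwarz with weights $(1,1,2)$ gives $u_0+u_1+u_2\le 2\sqrt{2+2^{4-n}}$. Adding back the saturated terms reproduces $\frac{n-4}{2}+\sqrt{\tfrac12+2^{2-n}}$.
\end{itemize}
This pins down the candidate point and the values $t_i$ of $S_i$ there. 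It is the multiplier check above that turns it into a proof.
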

See \cref{app:inn22_details} for the proof. This bound is always below the non-signaling maximum by at least $\frac{3-\sqrt 2}{2}$ and thus demonstrates an asymptotic separation between the set of quantum and non-signaling correlations. One remarkable fact is that the communication protocol assumes Alice's inputs to be independent. Despite that, we get an improvement over the original bound derived in \cite{our_prl}, which further exhibits the fact that the extended IC statement is inherently tighter and does not necessarily need to rely on correlations among inputs to provide stronger bounds. 

One may ask whether better bounds can be obtained by applying the entropic inequality in \cref{eq:ic_corr_nn22} directly, rather than passing to the zero-capacity regime. For isotropic boxes, where the NS-box biases are independent of both parties' settings, we find that in the $nn22$ scenario the tightest bounds are attained in the limit of vanishing capacity; see \cref{app:iso_nn22} for details. Thus, in this subspace, the quadratic constraints in \cref{thm:nn22_ineq} are as tight as the original entropic inequality.

\section{Tighter Quantum Bell Inequalities in the $d2dd$ Scenario}\label{sec:d2dd}
In the $d2dd$ scenario, a family of quantum Bell inequalities generalizing Uffink's inequality was derived using the original IC principle in \cite{gachechiladze2022quantum, our_prl}. Within a specific correlation subspace, these inequalities coincide exactly with the constraints imposed by Macroscopic Locality~\cite{gachechiladze2022quantum, navascues2010glance}. In this section, we generalize these results and derive new, tighter quantum Bell inequalities using the extended IC principle.

In this scenario, the extended IC statement has the same form as \cref{eq:ic_statement_correlated}, but the parties' inputs and outputs are no longer binary. We specify the relevant probability distributions through their associated biases as follows:
   \begin{align}\label{eq:input_dist}
    \begin{split}
    \mathrm{P}(a_0=k,a_1=l)&=\frac{1+q_{kl}}{d^2},\\
    \mathrm{P}(x'=x\oplus m)&=\frac{1+(d-1)e_{c}^m}{d},\\
    \mathrm{P}(A\oplus B=k\mid \alpha=i,\beta=j)&=\frac{1+(d-1)e^k_{ij}}{d}.
    \end{split}
    \end{align}
The biases are constrained by normalization, namely $\sum_{k,l}q_{kl}=0$, $\sum_m e_c^m=0$, and $\sum_k e^k_{ij}=0$ for every $i,j$. Here, the first expression specifies the joint distribution of Alice's input dits, with biases $\{q_{kl}\}_{k,l=0}^{d-1}$. The second expression describes a symmetric classical channel, namely the probability that the received message $x'$ differs from the sent message $x$ by a shift $m\in[d]$. The channel is therefore specified by the biases $\{e_c^m\}_{m=0}^{d-1}$. Finally, the third expression specifies the distribution of the modular output difference $A\oplus B$ of the non-signaling box, with associated biases $\{e^k_{ij}: i,k\in[d],\,j\in\{0,1\}\}$.

The derivation follows the same steps as the proof of \cref{thm:nn22_ineq}. We state the resulting inequalities below.

    \begin{theorem}\label{thm:corr_ineq}
		In the $d2dd$ scenario, any non-signaling theory satisfying the extended IC principle complies with the following inequalities:
		\begin{align}\label{eq:d2dd_corr_ineq}
        \begin{split}
			&-\sum_{j=0}^{d-1}(h^0_{j})^2+\frac{1}{d}\sum_{j,k=0}^{d-1}\left((h^0_{jk})^2-(h^1_{jk})^2\right) \\
			&+\frac{1}{d^2}\sum_{j,k,l=0}^{d-1}(1+q_{kl})(f^1_{jkl})^2\leq (d-1)^2 \sum_{m=0}^{d-1}(e_c^m)^2.
             \end{split}
		\end{align}
        where the tensor $f$ corresponds to a convolution of the NS-box biases $\{e^k_{ij}\}_{i,k\in [d]}^{j\in \{0,1\}}$ and the channel biases $\{e^m_c\}_{m=0}^{d-1}$ and is given by
        \begin{equation}
			f_{jkl}^i=\frac{(d-1)^2}{d}\sum_{m=0}^{d-1}e^m_ce^{j\oplus\overline{k}\oplus\overline{m}}_{\overline{k}\oplus l,i},
		\end{equation}
         while the tensors $h^i_{jkl}=q_{kl}+(1+q_{kl})f^i_{jkl}$,  $
			  h^i_{jk}=\frac{1}{d}\sum_{l=0}^{d-1}h^i_{jkl}$ and $ h^i_{j}=\frac{1}{d^2}\sum_{k,l=0}^{d-1}h^i_{jkl}$ are contractions of the $f$ tensor over Alice's dits.
	\end{theorem}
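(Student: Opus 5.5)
We follow the same route as for \cref{thm:nn22_ineq}: fix a concrete communication protocol, compute the joint distribution of Alice's dits and Bob's guess, substitute it into the extended IC statement \cref{eq:ic_statement_correlated}, and extract a quadratic inequality in the limit of vanishing channel capacity. The protocol is the $d$-ary analogue of van Dam's protocol. Alice inputs $\alpha=a_0\ominus a_1$ into the box, obtains $A$, and sends $x=a_0\ominus A$ through the channel. Bob inputs $\beta=b$ and obtains $B$. From the received message $x'$ he guesses $g=x'\oplus B$ for $b=0$; for $b=1$ he uses the analogous combination, with the index relabelling encoded by $\overline{k}$ and $\overline{k}\oplus l$.

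The first step is to compute $\mathrm{P}(g\ominus a_b=j\mid a_0=k,a_1=l,b=i)$. Composing the channel shift $m$ with the box difference $A\oplus B$ gives a convolution of two distributions with biases $e_c^m$ and $e^{\cdot}_{\cdot,i}$. The resulting bias is $\frac{d-1}{d}\cdot(d-1)\sum_m e_c^m e^{j\oplus\overline{k}\oplus\overline{m}}_{\overline{k}\oplus l,i}$, which is exactly the tensor $f^i_{jkl}$. Multiplying by the input distribution $(1+q_{kl})/d^2$ gives the joint probability $\mathrm{P}(a_0=k,a_1=l,g=\cdot)$. In unnormalised form this is $(1+h^i_{jkl})/d^3$, with $h^i_{jkl}=q_{kl}+(1+q_{kl})f^i_{jkl}$. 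The contractions $h^i_{jk}$ and $h^i_j$ are then the biases of the marginals obtained by summing over $l$, or over both $k$ and $l$.

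Next we write each term of \cref{eq:ic_statement_correlated} as a combination of entropies of these distributions. The first term is $I(a_0;g\mid b=0)=H(a_0)+H(g)-H(a_0,g)$, evaluated at $i=0$ using the marginals $h^0_{jk}$ and $h^0_j$. The second term is $I(a_1;g\mid b=1,a_0)=H(a_0,a_1)+H(a_0,g)-H(a_0,a_1,g)-H(a_0)$, evaluated at $i=1$ using $h^1_{jkl}$ and $h^1_{jk}$; the term $H(a_0,a_1)$ involves only $q$. The capacity of the symmetric channel is $\mathcal{C}=\log d-H(\{(1+(d-1)e_c^m)/d\})$.

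The last step is to scale all channel biases as $e_c^m\to t\,e_c^m$ and let $t\to0$. Then $f$ is linear in $t$, and both sides of the entropic inequality vanish at $t=0$. At $t=0$ the first-order terms also cancel. This follows from $\sum_j f^i_{jkl}=0$, which holds because $\sum_m e_c^m=0$ and the box biases are normalised, together with the fact that at $t=0$ we have $h^i_{jkl}=q_{kl}$ independent of $j$. Applying L'Hospital's rule twice, the entropy $-\sum p\log p$ with $p=(c+\delta)$ contributes $-\delta^2/(2c)$. For the $f$-dependent parts this gives $(f^1_{jkl})^2$ weighted by $(1+q_{kl})$ at the three-variable level. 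At the marginal levels it gives squares of the contracted $h$ minus their $t=0$ values, and these produce the terms $(h^0_{jk})^2-(h^1_{jk})^2$ and $-(h^0_j)^2$ after the $q$-only terms cancel. The channel side gives $(d-1)^2\sum_m (e_c^m)^2$ up to the same common factor. Since the inequality is homogeneous of degree two in $t$, we may set $t=1$ to obtain \cref{eq:d2dd_corr_ineq}.

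The main obstacle is the second-order expansion at the marginal levels. The denominators there are $1+\sum_l q_{kl}/d$ rather than constants, so it is not immediate that the stated unweighted squares of $h$ come out. I would first check the uniform-marginal structure, where the $q$-dependent denominators cancel because conditioning on $a_0$ and averaging over $a_1$ preserve the normalisations. Failing that, I would absorb the weights into the definitions of the $h$ contractions. I would verify the final signs against the binary case $d=2$, which must reduce to \cref{eq:better_Uffink}.
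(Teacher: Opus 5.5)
Your overall route is the same as the paper's: the generalized van Dam protocol, the joint law $(1+h^i_{jkl})/d^3$ and its contractions, the rescaling $e_c^m\to\gamma e_c^m$, and L'Hospital's rule applied twice. Two concrete points still need fixing. The first is that the protocol as you wrote it does not produce the tensor $f$ in the statement. With $\alpha=a_0\ominus a_1$, $x=a_0\ominus A$ and $g=x'\oplus B$, one gets $g=a_0\oplus N\oplus(B\ominus A)$. The law of $B\ominus A$ is not governed by the biases of $A\oplus B$, and the box input becomes $k\ominus l$ instead of $\overline{k}\oplus l$. For $d=2$ these differences disappear, so your binary sanity check would not catch them. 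The paper uses $\alpha=\overline{a_0}\oplus a_1$, $\beta=b$, $x=a_0\oplus A$, and $g=x'\oplus B$ for both values of $b$. Then $g=a_0\oplus N\oplus(A\oplus B)$, and $\mathrm{P}(g=j\mid a_0=k,a_1=l,b=i)=\sum_m \mathrm{P}(N=m)\,\mathrm{P}(A\oplus B=j\oplus\overline{k}\oplus\overline{m}\mid\alpha=\overline{k}\oplus l,\beta=i)=(1+f^i_{jkl})/d$. Here $j$ is the value of $g$ itself, not of $g\ominus a_b$, which is what the entropies of $(g,a_0,a_1)$ require.

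The second point is that the obstacle you flag is genuine and does not go away by cancellation. Write $q_k:=\sum_l q_{kl}$. Then $H(g,a_0\mid b=i)$ contributes, at second order in $\gamma$, a term proportional to $\sum_{j,k}(h^i_{jk}-q_k/d)^2/(1+q_k/d)$. So the unweighted squares in the statement come out only when $q_k=0$ for all $k$. The paper handles this by assumption: its proof imposes $\sum_k q_{kl}=\sum_l q_{kl}=0$, i.e.\ uniform marginals for Alice's dits, which the input family used later satisfies. Under this assumption $h^i_{jk}$ and $h^i_j$ vanish at $\gamma=0$, every marginal denominator equals $1$, and the expansion gives exactly the stated inequality. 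You should state this hypothesis explicitly. Your fallback of absorbing the weights into the $h$ contractions would prove a different inequality, not the one stated.
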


The proof is given in \cref{app:d2dd_details}. We remark that, unlike the inequalities derived from the original IC statement in \cite{gachechiladze2022quantum,our_prl}, which are formulated only in terms of the NS-box biases, the inequalities in \cref{eq:d2dd_corr_ineq} also depend on the channel parameters. For an arbitrary non-uniform distribution of Alice's input dits, different channels can therefore yield inequivalent constraints, and there is no single channel choice that is optimal in general.

Next, we compare the inequalities in \cref{eq:d2dd_corr_ineq} with the previous family of quantum Bell inequalities derived using the original IC principle in \cite{gachechiladze2022quantum, our_prl}. For this, we choose a subspace of the NS polytope spanned by a convex combination of two classes of probability distributions, namely the extremal points of the NS polytope, also known as Popescu-Rohrlich (PR) boxes, and the vertices of the local polytope \cite{popescu1994quantum,brunner2014bell}. The probability distribution for each such point is given by 
	\begin{align}
    \begin{split}
		\text{P}^{\text{PR}}_{ijk}(A,B|\alpha,\beta)&=\frac{1}{d}\delta_{A\oplus B,\alpha\cdot\beta\oplus i\cdot\alpha\oplus j\cdot\beta\oplus k },\cr
		\text{P}^{\text{L}}_{ijkl}(A,B|\alpha,\beta)
		&=\delta_{A,i\cdot \alpha\oplus j}\delta_{B,k\cdot \beta\oplus l},
          \end{split}
	\end{align}
	where the indices $i,j,k,l,A,B,\alpha \in [d]$ and $\beta\in\{0,1\}$. We now choose the following slice containing a PR-box, a symmetrized family of local boxes, and white noise
	\begin{align}\label{eq:slice_dist}
    \mathrm{P}
    =(1-s-t)\mathrm{P}^{\mathrm{PR}}_{000}
    +\frac{s}{d}\sum_{r=0}^{d-1}\mathrm{P}^{\mathrm{L}}_{2r0\overline{r}}
    +t\,\mathrm{P}^{\mathrm{wn}},
    \end{align}
    where $\mathrm{P}^{\mathrm{wn}}(A,B\mid\alpha,\beta)=1/d^2$,  $s,t\geq0$, $s+t\leq1$, and $\overline{r}\equiv -r \pmod d$ denotes the additive inverse of $r$. For Alice's input distribution, we choose the following correlated family:
    \begin{equation}\label{eq:dft_dist}
        q_{kl}=\epsilon\,\operatorname{Re}\!\left[\frac{1}{d} +\exp\!\left(\frac{2\pi \mathrm{i}(k+1)(l+1)}{d+1}\right)  \right],
    \end{equation}
where $\{q_{kl}\}_{k,l=0}^{d-1}$ are the biases defined in \cref{eq:input_dist}, and $\epsilon$ controls the correlation strength. We choose a specific channel by setting the biases to
	\begin{align}
		e^i_c=\begin{cases} 
			\frac{-1}{d-1}, & i< \frac{d-1}{2} \\
			\frac{1}{d-1}, & i> \frac{d-1}{2} \\
			0, & \text{otherwise.} 
		\end{cases}
	\end{align}
	With this choice of input distribution and channel parameters, we compute the corresponding NS-box biases $\{e^k_{ij}:i,k\in[d],\,j\in\{0,1\}\}$ and evaluate \cref{eq:d2dd_corr_ineq} on the correlation slice defined above.
\begin{figure}
    \centering

\includegraphics{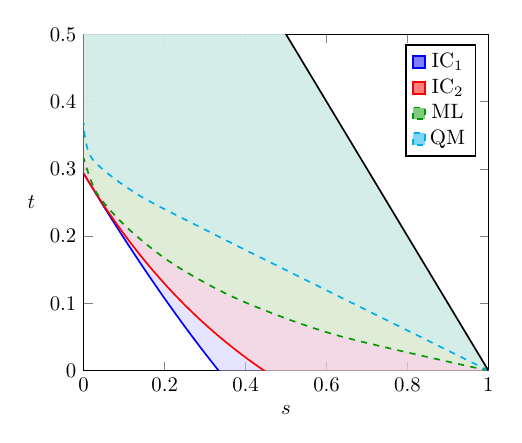}
		\caption{ A comparison of the inequalities obtained from the original formulation of IC ($\text{IC}_1$) and its extension ($\text{IC}_2$) in a chosen correlation slice in the $d2dd$ scenario ($d=3$). The axes correspond to the weights of each distribution described in \cref{eq:slice_dist}. The dotted lines corresponding to ML and quantum mechanics (QM) were computed numerically using the NPA hierarchy.}
		\label{fig:comparison}
	\end{figure}

    We optimize over $\epsilon$ and plot the envelope of the resulting family of inequalities in \cref{fig:comparison} for $d=3$. We observe that the correlated inequality is strictly stronger than the original IC inequality from \cite{gachechiladze2022quantum}. Interestingly, in contrast to the binary case described by \cref{eq:better_Uffink}, the strongest inequality is not obtained in the limit of nearly perfect input correlations. Moreover, for a fixed correlation strength, the corresponding inequality can be either tighter or weaker in different regions of the slice. This behavior persists in higher dimensions; see \cref{app:slice_details} for details.

    Comparing the inequalities obtained from the extended IC principle with those obtained from other principles, we find that, in this particular slice, the extended-IC inequalities derived above are weaker than ML or, equivalently, than the first level $Q_1$ of the NPA hierarchy~\cite{navascues2007bounding}. This should not be interpreted as an inclusion relation between IC and ML, since the inequalities above were derived in the zero-capacity limit of the noisy communication channel. At nonzero capacity, IC can provide bounds that are tighter than those implied by ML; see, for instance, \cite{cavalcanti2010macroscopically, miklin2021information}. However, the algorithm introduced in \cite{our_prl} is not suited to deriving polynomial inequalities in this regime. Instead, one can derive bounds on the biases of individual NS-boxes, or equivalently on the success probability of simulating a PR box. In the next section, we follow this approach and derive bounds on the success probabilities of random access codes.

\section{Bounds on Entanglement-assisted Random Access Codes }

The original IC scenario is an instance of a communication primitive known as a \textit{random access code} (RAC), in which Alice holds a random string and Bob must guess one specified part of it. We denote a RAC by the tuple $(n^{(d)},m,p)$, where $n$ is the number of dits held by Alice, $d$ is their alphabet size, $m<n$ is the number of communicated dits, and $p$ is the minimum probability with which Bob correctly guesses the requested dit; see \cite{pawowskiearac,ambainis2009quantumrandomaccesscodes} for an overview. As in many communication tasks, quantum resources can provide an advantage over classical RAC protocols \cite{Ambainis2024}. Two major models of quantum-assisted RACs are quantum random access codes (QRACs) and entanglement-assisted random access codes (EARACs): in QRACs, Alice communicates qudits, whereas in EARACs the communication remains classical but the parties may share an entangled state.

Considerable work has been devoted to determining the optimal winning probabilities achievable with quantum resources in both models \cite{pawowskiearac,ambainis2009quantumrandomaccesscodes,deVicente2019}. At first sight, teleportation might suggest that the two models are interconvertible, since a protocol involving the transmission of qubits can be simulated by sending classical bits in the presence of shared entanglement. However, the relation between the achievable winning probabilities in the two models is more subtle. For binary outcomes and messages, EARACs outperform QRACs \cite{bin_earac}, and in the $(n^{(2)},1,p)$ scenario every QRAC has an equivalent EARAC \cite{rac_equiv}. In general, however, the two models are inequivalent: EARACs can outperform QRACs assisted only by classical shared randomness in $(n^{(2)},1,p)$ scenarios, whereas in some higher-dimensional cases QRACs outperform EARACs \cite{rac_inequiv,tavakoli_spatial}.

A natural question is how IC restricts RAC winning probabilities. In this section, we derive new analytic bounds on the success probabilities in $(n^{(d)},1,p)$ EARAC scenarios. We consider the ``isotropic'' or ``unbiased errors'' case introduced in \cite{miklin2021information}, where the guessing probabilities for all dits are identical. 
We consider the effect of correlations among Alice's dits on the success probabilities and prove the surprising result that inequalities that consider correlations among Alice's inputs, despite being tighter in some instances, do not lead to better bounds. Building on this observation, we argue for the optimality of the known IC-implied bounds on EARAC winning probabilities \cite{miklin2021information}.

\subsection{Bounds for $(n^{(d)},1,p)$ scenario}
We begin by considering the $(n^{(2)},1,p)$ EARAC scenario. For the ``unbiased errors'' case, we define the guessing probability in the case of a perfect channel as
\begin{equation}
		p_\mathrm{guess}=\mathrm{P}(g=a_i|b=i)=\frac{1+e}{2}, \;\;\forall i\in[n],
	\end{equation}
	where $e\in[-1,1]$ denotes the bias associated with the probability of correctly guessing $a_i$. If the communication channel is noisy, sending the message unchanged with probability $p_c$ and flipping it with probability $1-p_c$, then the overall winning probability is
	\begin{align}
		p=p_\mathrm{guess}\; p_c+(1-p_\mathrm{guess})(1-p_c)=\frac{1+e_ce}{2},
	\end{align}
	where $e_c=2p_c-1$ is the bias associated with the channel's probability $p_c$. We now use IC to upper-bound this winning probability. By Fano's inequality,
	\begin{align}
		n\left(1-h(p)\right)\leq\sum_{i=0}^{n-1}I(g;a_i|b=i)&\leq\mathcal{C}=1-h(p_c)\cr
		\Rightarrow n\left(1-h\left(\frac{1+e_ce}{2}\right)\right)&\leq 1-h\left(\frac{1+e_c}{2}\right),\hspace{1cm}
	\end{align}
	where $h(x)=-x\log_2 x-(1-x)\log_2(1-x)$ is the binary Shannon entropy. We observe that both the left- and right-hand sides of the inequality vanish in the limit $e_c\rightarrow0$. Thus, by using the algorithm as described in \cite{our_prl}, we get a bound on the bias and thus, the associated winning probability in the case of perfect transmission is
	\begin{equation}
		e^2\leq\frac{1}{n}\;\Rightarrow\; p\leq\frac{1}{2}\left(1+\frac{1}{\sqrt n}\right).
	\end{equation}
	Thus, we have re-derived the well known bound on $(n^{(2)},1,p)$ EARAC winning probability \cite{ambainis2009quantumrandomaccesscodes,pawowskiearac}. The same argument extends naturally to dits. We consider the \textit{uniform-error channel}, which sends the message unchanged with probability $p_c$ and maps it to each of the other $d-1$ symbols with probability $(1-p_c)/(d-1)$. In this case, the winning probability is
    \begin{equation}
			p=\frac{1}{n}\sum_{i=0}^{n-1}\text{P}(g=a_i|b=i)=\frac{1+(d-1)e_ce}{d}.
	\end{equation}
    where $e_c=\frac{d p_c-1}{d-1}$ is the bias associated to the probability $p_c$. Using a higher-dimensional version of Fano's inequality and following the same steps, we obtain the following bound on the winning probability in the $(n^{(d)},1,p)$ scenario:
	\begin{equation}\label{eq:earac_general}
		p\leq\frac{1}{d}\left(1+\frac{d-1 }{\sqrt n}\right).
	\end{equation}
    See \cref{app:rac_bounds} for details. 

    Curiously, this is exactly the bound for $(n^{(d)},1,p)$ QRAC scenarios that was recently derived in \cite{Farkas2025simplegeneralbounds}. It is not known in general whether QRAC bounds are tighter than EARAC bounds, although numerical evidence suggests that this may be the case in some regimes \cite{tavakoli_spatial}. Nevertheless, the bound in \cref{eq:earac_general} is known not to be tight for QRACs \cite{Ambainis2024}. This leaves open two possibilities: either the same bound applies to both models, in which case EARACs are no more powerful than QRACs for this class of tasks, or EARACs can outperform QRACs in higher dimensions. Standard derivations of such bounds typically use operator-theoretic methods, assuming a Hilbert-space structure and optimizing suitable operator norms over measurements and states. In contrast, the derivation above is theory-independent: it relies only on IC, is conceptually minimal in its assumptions, and is technically simpler.

	\subsection{Optimality of EARAC bounds}\label{sec:optimality}
	One of the early applications of IC showed that some correlations satisfying Macroscopic Locality nevertheless violate IC \cite{cavalcanti2010macroscopically}. This was demonstrated using an isotropic family of PR boxes whose biases are defined using the indicator function,  $[\cdot]$:
    \begin{equation}\label{eq:iso_pr}
        e^k_{ij}=[k= i\cdot j]e-[k\neq i\cdot j]\frac{e}{d-1},
    \end{equation}
    where $i,k\in [d]$ and $j\in\{0,1\}$. The main idea was to mix extremal non-signaling distributions with white noise and study the critical noise levels above which a violation of IC occurs. In more recent work, these critical-noise bounds were improved by using a noisy channel with a single box instead of concatenating several boxes \cite{miklin2021information}.

    The bias associated with the winning probability can equivalently be interpreted as the complement of the amount of white noise mixed into a perfect PR box. Indeed, for the mixture
$
\mathrm{P}_{\mathrm{mix}}=e\,\mathrm{P}_{\mathrm{PR}}+(1-e)\mathrm{P}_{\mathrm{wn}}$, $
\mathrm{P}_{\mathrm{wn}}(A,B\mid\alpha,\beta)=\frac{1}{d^2},
$
the associated bias is precisely $e$, since white noise has zero bias. We list several known bounds from the literature \cite{tavakoli_spatial,miklin2021information} in \cref{tab:table1}.

	\begin{table}[h]
		\caption{\label{tab:table1}%
			A collection of RAC bounds from the literature. Here, $(n,d)$ denotes the number of dits Alice receives and their alphabet size, respectively. The first three columns list the maximum winning probabilities achievable with classical shared randomness ($p^C$), quantum resources ($p^E_{\mathrm{num}}$), and almost-quantum correlations ($p^E_{Q_{1+AB}}$), respectively. The final column lists the optimal winning probabilities subject to the IC principle ($p^{IC}$), obtained numerically. The analytical bounds in \cref{eq:earac_general}, derived in the zero-capacity limit, are looser than the numerical IC bounds shown here, except for $d=2$.
		}
		\begin{ruledtabular}
			\begin{tabular}{lrrrr}
				\textrm{(n,d)}& \textrm{$p^C$}& \textrm{$p^E_{num}$} & \textrm{$p^E_{Q_{1+AB}}$} & \textrm{$p^{IC}$}\\ 
				\colrule
				(2,2) & 0.7500 & 0.8536 & 0.8536 & 0.8536 \\
				(2,3) & 0.6667 & 0.7778 & 0.7778 & 0.8016 \\
				(2,4) & 0.6250 & 0.7405 & 0.7471 & 0.7718 \\
				(2,5) & 0.6000 & 0.7178 & 0.7179 & 0.7517 \\
				(3,3) & 0.6296 & 0.6854 & 0.6912 & 0.7142 \\
			\end{tabular}
		\end{ruledtabular}
	\end{table}
	
	We emphasize that the analytical bound in \cref{eq:earac_general} is not the tightest bound derivable from IC. In particular, by numerically optimizing the winning probability over the input distribution and the communication channel, one obtains the stronger bounds listed in \cref{tab:table1}. In general, the optimal bound occurs at nonzero capacity. Interestingly, these bounds exactly match those obtained from the original IC statement in \cite{miklin2021information}.

    Since the extended IC statement can yield tighter inequalities, it is natural to ask whether it also improves the bounds on EARAC winning probabilities. Surprisingly, for the isotropic NS boxes defined in \cref{eq:iso_pr}, the answer is negative. We show that introducing correlations among Alice's inputs does not increase the left-hand side of the extended IC inequality and therefore cannot lead to stronger bounds than those obtained from the original IC principle in \cite{miklin2021information}. This suggests that the latter bounds are optimal within the framework considered here. The argument has two parts, which we present below. For this purpose, let us define the left-hand side of the extended IC as the \textit{total information}
    \begin{equation}
      I_\epsilon(e,\{e_c^i\}_{i=0}^{d-1}):=I(g;a_0|b=0)+I(g;a_1|b=1,a_0).
     \end{equation}

    \begin{theorem}\label{thm:corr_weak}
    In the $d2dd$ scenario, consider isotropic NS boxes and any symmetric communication channel. For any admissible correlated input distribution of Alice's dits, the total information appearing on the left-hand side of the extended IC statement is upper-bounded by its value for independent, uniformly distributed inputs. Equivalently,
    \begin{equation}
    I_\epsilon(e,\{e_c^i\}_{i=0}^{d-1})
    \leq
    I_0(e,\{e_c^i\}_{i=0}^{d-1}),
    \end{equation}
    where $I_0$ denotes the case $\epsilon=0$.
    \end{theorem}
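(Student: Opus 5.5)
The plan is to show that, for isotropic boxes and a symmetric channel, Bob's error in the protocol used throughout \cref{sec:d2dd} is additive noise that is independent of Alice's dits. Once this holds, each term of $I_\epsilon$ reduces to an output entropy minus a fixed noise entropy. The output entropy is at most $\log_2 d$, and this value is attained by independent uniform inputs.

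\textbf{Step 1: error structure.} In the van Dam-type protocol underlying \cref{thm:corr_ineq}, Alice feeds $\alpha = a_1\ominus a_0$ into the box, obtains $A$, and sends $x = a_0\oplus A$. Bob inputs $\beta=b$, obtains $B$, receives $x' = x\oplus m$, and outputs $g = x'\ominus B$. For the isotropic box in \cref{eq:iso_pr} we have
\begin{equation*}
A\oplus B = \alpha\cdot\beta \oplus Z_{\mathrm{box}},
\end{equation*}
where the distribution of $Z_{\mathrm{box}}$ depends only on $e$, and not on $(\alpha,\beta)$ or on how $\alpha$ was chosen. This follows from no-signaling together with the fact that the biases $e^k_{ij}$ depend on $(i,j)$ only through the correct value $i\cdot j$. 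The channel shift $m$ is independent of everything else. Consequently, for both $b\in\{0,1\}$,
\begin{equation*}
g = a_b \oplus Z, \qquad Z := \pm Z_{\mathrm{box}}\oplus m,
\end{equation*}
and $Z$ is independent of $(a_0,a_1)$. The law of $Z$ is the convolution of the box-error and channel-error distributions, and it is the same for $b=0$ and $b=1$. I will verify this carefully, including the sign conventions and the fact that $Z$ remains independent of $(a_0,a_1)$ even though $\alpha$ is a function of them. This step carries the main content and is where I expect the obstacle to lie. If the paper's protocol uses a slightly different encoding, I would redo this bookkeeping for that encoding; the argument only needs the additive independent-noise form.

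\textbf{Step 2: bounding each term.} Since $Z$ is independent of $a_0$,
\begin{equation*}
I(g;a_0\mid b=0) = H(a_0\oplus Z) - H(a_0\oplus Z\mid a_0) = H(a_0\oplus Z) - H(Z) \le \log_2 d - H(Z).
\end{equation*}
For the second term we condition on $a_0$. Because $Z$ is independent of $(a_0,a_1)$,
\begin{equation*}
I(g;a_1\mid b=1,a_0) = H(a_1\oplus Z\mid a_0) - H(Z\mid a_0,a_1) = H(a_1\oplus Z\mid a_0) - H(Z) \le \log_2 d - H(Z).
\end{equation*}
Hence $I_\epsilon \le 2\bigl(\log_2 d - H(Z)\bigr)$ for every admissible input distribution $\{q_{kl}\}$.

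\textbf{Step 3: equality for independent uniform inputs.} When $\epsilon=0$, the dits $a_0$ and $a_1$ are independent and uniform. Then $a_0\oplus Z$ is uniform, and $a_1\oplus Z$ is uniform conditionally on every value of $a_0$. Both inequalities in Step 2 are therefore equalities, so $I_0 = 2\bigl(\log_2 d - H(Z)\bigr)$. Combining this with Step 2 gives $I_\epsilon \le I_0$, which is the claim of \cref{thm:corr_weak}.
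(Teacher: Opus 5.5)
Your argument is correct and is essentially the paper's proof. The paper computes $\mathrm{P}(g=j\mid a_0,a_1,b)=\frac{1+(d-1)e\,e_c^{j\oplus\overline{a_b}}}{d}$, which is exactly your additive noise $Z$, independent of the inputs, with law $\frac{1+(d-1)e\,e_c^{z}}{d}$. It then removes the conditional-entropy terms using cyclic-shift invariance and bounds $H(g\mid b=0)$ and $H(g\mid a_0,b=1)$ by $\log_2 d$, using concavity for the latter; your direct bound on $H(a_1\oplus Z\mid a_0)$ makes that concavity step unnecessary.

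One fix is needed in Step 1. Because the box correlates $A\oplus B$, Bob must decode $g=x'\oplus B$, as in the paper. Decoding $g=x'\ominus B$ leaves an uncontrolled $2B$ term for $d>2$, which breaks the additive structure. With $g=x'\oplus B$ you get $Z=Z_{\mathrm{box}}\oplus m$, with no sign ambiguity.
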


	The proof is given in \cref{app:optimality_proof1}. Second, we show that among all symmetric channels of the form in \cref{eq:input_dist}, the uniform-error channel yields the optimal bounds on the winning probabilities. 
	
    \begin{theorem}\label{thm:opt_channel}
		In the $d2dd$ scenario, consider isotropic NS boxes and uniformly distributed Alice inputs. Among all symmetric communication channels of the form in \cref{eq:input_dist}, the tightest bound on the isotropic bias, and hence on the winning probability, is achieved by the uniform-error channel. Equivalently, the optimum is attained when all channel biases except one are identical, i.e., $\{e_c^i\}_{i=0}^{d-1}=\{r,\dots,r,-(d-1)r\}$ up to permutations.
	\end{theorem}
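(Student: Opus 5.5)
The plan is to reduce the statement to a finite-dimensional optimization over the channel-bias vector $\vec x=(e_c^0,\dots,e_c^{d-1})$ and to show that the optimizer has only one entry differing from the others. First I will make both sides of the extended IC statement explicit for isotropic boxes (\cref{eq:iso_pr}) and uniform independent inputs. By \cref{thm:corr_weak} it suffices to take $\epsilon=0$. With isotropic bias $e$, Bob's guess is correct with probability $(1+(d-1)e)/d$ and otherwise uniformly wrong. Convolving this error distribution with the symmetric channel then gives $\mathrm{P}(g=a_i\oplus m\mid b=i)=(1+(d-1)e\,e_c^m)/d$, since convolving with a uniform-error distribution simply rescales all biases by $e$.

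Writing $\varphi(y)=\frac{1+(d-1)y}{d}\log_2\bigl(1+(d-1)y\bigr)$ and $D(\vec y)=\sum_m\varphi(y_m)$, which is the relative entropy to the uniform distribution, each mutual-information term equals $D(e\vec x)$. The capacity of the symmetric channel is $D(\vec x)$. The extended IC statement therefore reads
\begin{equation*}
\Phi_e(\vec x):=2D(e\vec x)-D(\vec x)\le 0 .
\end{equation*}
A channel $\vec x$ rules out the bias $e$ exactly when $\Phi_e(\vec x)>0$. The tightest bound on $e$ is the infimum over channels of $e_{\max}(\vec x)=\sup\{e:\Phi_e(\vec x)\le 0\}$. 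It therefore suffices to prove the following: for every fixed $e\in(0,1)$, the supremum of $\Phi_e$ over the admissible set
\begin{equation*}
\Big\{\sum_m x_m=0,\ -\tfrac{1}{d-1}\le x_m\le 1\Big\}
\end{equation*}
is attained at a vector of the form $\{r,\dots,r,-(d-1)r\}$. Then whenever some channel excludes $e$, so does a uniform-error channel. Note that we \emph{maximize} $\Phi_e$, i.e.\ we look for the channel that makes violation easiest, and this is what yields the tightest bound.

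Next I will locate the maximizers with a Lagrange-multiplier argument on the hyperplane $\sum_m x_m=0$. At an interior critical point, $\partial_{x_m}\Phi_e=2e\varphi'(ex_m)-\varphi'(x_m)=\mu$ must hold for all $m$. Up to an additive constant, $\varphi'(y)\propto\log(1+(d-1)y)$, so we need to understand
\begin{equation*}
\psi(y)=2e\log\bigl(1+(d-1)ey\bigr)-\log\bigl(1+(d-1)y\bigr).
\end{equation*}
Its derivative is proportional to $\frac{2e^2}{1+(d-1)ey}-\frac{1}{1+(d-1)y}$. The numerator of this difference is affine in $y$, so $\psi'$ changes sign at most once and $\psi$ is unimodal. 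Hence $\psi(y)=\mu$ has at most two solutions, and every interior critical point takes at most two distinct values, say $u$ with multiplicity $k$ and $v$ with multiplicity $d-k$, subject to $ku+(d-k)v=0$.

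Boundary points, where some $x_m=-1/(d-1)$ (a forbidden symbol of the channel), will be handled by the same argument restricted to the free coordinates. This gives at most three values, one of them pinned at the boundary, and the boundary cases must then be compared as well.

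The final step, which I expect to be the main obstacle, is comparing these finitely many one-parameter families. The family is indexed by $k\in\{1,\dots,d-1\}$ and parametrized by $u$, and the goal is to show that $k=1$ (equivalently $k=d-1$, which is the same form after relabelling the sign of $r$) gives the largest $\max_u\Phi_e$. I would first try the following reparametrization. Parametrize each family by its capacity $c=D(\vec x)$, and show that for fixed $c$ the quantity $D(e\vec x)$ is largest when $k=1$. One way is to treat $k$ as a continuous variable and show monotonicity in $k$ by differentiating along the level set. Another is to use a majorization argument: along the level set $D=c$, the one-outlier vector should majorize the other two-valued vectors. The boundary cases would then be checked to be dominated by the corresponding interior one-outlier vector. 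As a sanity check, a second-order expansion gives $\Phi_e\approx(2e^2-1)D(\vec x)$, independent of the channel. So the comparison is genuinely decided by the cubic and higher terms, and I would verify the sign of the cubic term $\sum_m x_m^3$ on each family first to confirm the orientation before attempting the full argument.
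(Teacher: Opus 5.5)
Your reduction is correct and matches the paper up to the Lagrange step. $\Phi_e$ is a positive multiple of the paper's $\sum_m G(x_m)$, and your $\psi$ is $G'$ up to an affine rescaling. The observation that $\psi'$ has an affine numerator, so the free coordinates of a critical point take at most two values, is exactly the paper's KKT argument. Your fixed-$e$ reformulation (if some channel gives $\Phi_e>0$, then so does a uniform-error channel) is also sound.

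The gap is that the step carrying the actual content of the theorem is only a list of strategies, none carried out. That step must exclude the two-valued critical points with multiplicity $2\le k\le d-2$ and the configurations with coordinates pinned at $-\tfrac{1}{d-1}$. One of your strategies cannot work. $D$ is a sum of a strictly convex function of the coordinates, hence strictly Schur-convex. So two vectors on the same level set $D=c$ that are not permutations of each other are never comparable in the majorization order. The other strategy, monotonicity in $k$ at fixed capacity, is a stronger statement than the theorem requires, and you give no argument for it.

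The missing idea is already in your computation, if you use it as a second-order statement rather than only to count roots. The numerator of $G''$, namely $2e^2-1+(2e-1)(d-1)ex$, is affine in $x$ and equals $e-1<0$ at $x=-\tfrac{1}{d-1}$. Hence there is a point $x_0\in(-\tfrac{1}{d-1},1]$ with $G$ concave on $[-\tfrac{1}{d-1},x_0]$ and convex on $[x_0,1]$. If $x_0=1$ (in particular for $e\le\tfrac12$), $G$ is concave throughout and $\sum_m G(x_m)\le dG(0)=0$. Now:
coordinates in the concave piece can be replaced by their mean without decreasing $\sum_m G(x_m)$, by Jensen;
two coordinates in $(x_0,1]$ can be spread apart without decreasing it until one of them reaches $x_0$;
the cap $x_m=1$ is never hit first, since it would force every other coordinate to equal $-\tfrac{1}{d-1}<x_0$.
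So for every $e\in(0,1)$ the supremum of $\Phi_e$ is attained at one value $r$ together with $d-1$ copies of $-r/(d-1)$. This removes all other two-valued families at once, since any critical point with two coordinates on the convex branch is not a maximum. It also handles the pinned boundary points, and it needs no a priori bound on $e$.

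The paper closes the gap differently. It uses the a priori bound $e\le1/\sqrt2$ from the original-IC $d2dd$ inequality to get $G''\le0$ on $[-\tfrac{1}{d-1},0]$. It then fixes the positive value $r$ and treats the multiplicity $m$ as continuous. Implicit differentiation of $mG(r)+(d-m)G(r')=0$, using $J=G-xG'$ and $J'=-xG''$, shows that the critical bias $e^\ast(m,r)$ is nondecreasing in $m$. Nestedness of the admissible $r$-ranges then gives $m=1$. Boundary points are reduced to this case by averaging the negative biases.
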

    
	We refer to \cref{app:optimality_proof2} for a proof of the above theorem. From the above two theorems, we see that the uniform distribution and the uniform error channel are optimal, which lets us conclude the following:
	\begin{corollary}\label{conj:2}
        In the $d2dd$ scenario, the previously known IC-derived EARAC bounds remain optimal under the extended formulation of information causality when optimizing over arbitrary input distributions and arbitrary symmetric communication channels.
	\end{corollary}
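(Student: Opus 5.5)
The corollary follows by combining \cref{thm:corr_weak} and \cref{thm:opt_channel}. The argument is a short monotonicity/reduction argument about the feasible set of isotropic biases. Fix $d$ and consider the isotropic NS boxes of \cref{eq:iso_pr} with bias $e$. For a given input distribution $\{q_{kl}\}$ and symmetric channel $\{e_c^m\}$, the extended IC statement rules out $e$ whenever $I_\epsilon(e,\{e_c^m\})>\mathcal{C}(\{e_c^m\})$. Here the capacity of the symmetric channel, $\mathcal{C}=\log d-H(\{(1+(d-1)e_c^m)/d\}_m)$, depends only on the channel and not on Alice's input distribution. The bound on $e$, and hence on the winning probability $p=(1+(d-1)e)/d$ (or its channel-assisted version), is therefore $e^\ast=\sup\{e:\ I_\epsilon(e,\{e_c^m\})\le \mathcal{C}\ \text{for all admissible } q,\{e_c^m\}\}$. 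The goal is to show that this supremum equals the one obtained by restricting to $q\equiv 0$ and to the uniform-error channel, which is exactly the setting of the original IC bounds in \cite{miklin2021information}.

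\textbf{Step 1: eliminate input correlations.} For each fixed symmetric channel, \cref{thm:corr_weak} gives $I_\epsilon(e,\{e_c^m\})\le I_0(e,\{e_c^m\})$. Any violation $I_\epsilon>\mathcal{C}$ therefore implies $I_0>\mathcal{C}$ for the same $e$ and channel. Hence the set of biases excluded using a correlated distribution is contained in the set excluded using independent uniform inputs. I would also note that for $\epsilon=0$ the conditioning on $a_0$ in the second term drops out: $a_1$ is independent of $a_0$, and for the isotropic box with a symmetric channel Bob's guess of $a_1$ depends on $a_0$ only through a relabeling that leaves the mutual information unchanged. So $I_0$ coincides with the left-hand side of the original IC statement. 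This identification should be checked explicitly from the expression for $f^i_{jkl}$ of \cref{thm:corr_ineq} with $q_{kl}=0$.

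\textbf{Step 2: optimize the channel.} With uniform inputs, \cref{thm:opt_channel} states that among all symmetric channels the tightest bound on $e$ is attained by the uniform-error channel, i.e.\ $\{e_c^m\}=\{r,\dots,r,-(d-1)r\}$ up to permutation. Taking the optimum over the remaining parameter $r$ reproduces exactly the numerical procedure of \cite{miklin2021information}, which gives the values $p^{IC}$ in \cref{tab:table1}.

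\textbf{Step 3: combine.} Optimizing jointly over $(q,\{e_c^m\})$ is no better than first setting $q=0$ (Step 1, valid pointwise in the channel) and then optimizing over channels (Step 2). The extended IC bound therefore equals the original IC bound, and the known EARAC bounds remain optimal.

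The main obstacle I expect is the ordering of the two optimizations. The sup over channels must be taken after the reduction to uniform inputs, which works only because \cref{thm:corr_weak} holds for every symmetric channel rather than just the optimal one. A second point to make precise is that ``tightest bound'' in \cref{thm:opt_channel} refers to the largest violation region in $e$, and that this region is monotone in $e$. I would argue that $I_0$ is increasing in $|e|$ for fixed channel by the data-processing inequality, since mixing in white noise degrades the box. It then follows that each constraint excludes an interval $e>e^\ast(\text{channel})$, so taking minima over channels is legitimate.
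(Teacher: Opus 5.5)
Your proposal is correct and follows essentially the paper's argument. You apply \cref{thm:corr_weak} for each fixed symmetric channel; since the capacity does not depend on Alice's inputs, correlations never enlarge the excluded range of $e$. You then apply \cref{thm:opt_channel} to reduce to the uniform-error channel, which recovers the original IC bounds of \cite{miklin2021information}. Your extra remarks make explicit what the paper leaves implicit: that $I_0$ equals the original IC left-hand side (for the isotropic box, $\mathrm{P}(g\mid a_0,a_1,b=1)$ in fact does not depend on $a_0$ at all), and that the excluded set of biases is upward closed in $e$ by data processing under white-noise mixing.
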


    \section{Discussions}
	
	In this work, we studied the consequences of the extended Information Causality principle for quantum Bell inequalities and random access codes. Our results show that the extended formulation can lead to strictly stronger constraints than the original IC principle, but also that this additional strength is highly scenario-dependent.
    
    First, we derived a family of analytical quantum Bell inequalities in the $nn22$ scenario. These inequalities strengthen the previously known IC-derived inequalities and lead to an improved analytical upper bound on the Collins--Gisin $I_{nn22}$ family. A noteworthy feature of this result is that the improvement already appears for protocols in which Alice's inputs are independent. Thus, the extended IC statement can yield stronger constraints even when the protocol does not explicitly exploit correlations among Alice's inputs.
    
    We then considered the $d2dd$ scenario. In contrast to the $nn22$ case, the resulting inequalities depend not only on the biases of the non-signaling box, but also on the choice of Alice's input distribution and on the parameters of the classical communication channel. By choosing correlated input distributions, we obtained inequalities that improve upon the previously known IC-derived quantum Bell inequalities. However, in the correlation slice considered in this work, these inequalities remain weaker than Macroscopic Locality, or equivalently the first level $Q_1$ of the NPA hierarchy. Since our polynomial inequalities are derived in the zero-capacity regime, this comparison should not be interpreted as a general relation between IC and ML (especially, when it is known that macroscopically local correlations violate the original IC statement). Rather, it points to a limitation of the current method for extracting tractable polynomial constraints from IC.

    Finally, we applied IC to entanglement-assisted random access codes. We recovered the known IC bound for binary EARACs and derived analytical bounds for higher-dimensional $(n^{(d)},1,p)$ scenarios. These bounds were obtained in a theory-independent manner, relying only on information-theoretic constraints rather than on the Hilbert-space formalism of quantum theory. We also analyzed whether the extended IC statement can improve the known IC-derived EARAC bounds. For isotropic non-signaling boxes in the $d2dd$ scenario, we found that it cannot: introducing correlations among Alice's inputs does not increase the relevant total information term, and the optimal bound is achieved by the uniform input distribution together with the uniform-error channel. Therefore, within the class of protocols and channels considered here, the previously known IC-derived EARAC bounds remain optimal under the extended formulation of Information Causality.

    Taken together, these results show that the role of the extended IC inequality is subtle. In several Bell-inequality scenarios, the extended IC inequality in \cref{eq:ic_statement_correlated} leads to stronger constraints than the original IC statement, even when no correlations are explicitly introduced in Alice's input distribution. This is also supported by recent work~\cite{pollyceno2026communication}, where the authors re-derived the same extended IC inequality (Eq.(10) in~\cite{pollyceno2026communication}) using Fourier-Motzkin elimination and applied it to the $(4,4;2,2)$ Bell scenario. They showed that this inequality yields better bounds than the original IC inequality, again without relying on correlated inputs. Thus, while correlations among Alice's inputs can be useful in some settings (e.g., in the $d2dd$ scenario), the improvement provided by the extended IC inequality is not solely due to the presence of such correlations. Understanding precisely when correlations themselves provide an additional advantage remains an interesting open question.

    Several directions remain open. First, in the $d2dd$ scenario, the correlated input distribution and communication channel were chosen in a structured but not fully optimized way. It would be interesting to characterize the input distributions and channels that produce the tightest extended-IC inequalities. Second, the present method extracts polynomial inequalities by taking the zero-capacity limit. Since IC can give stronger constraints at nonzero capacity, developing methods for deriving useful analytical or polynomial constraints away from this limit would be valuable. Third, on the random-access-code side, it would be natural to extend the analysis to more general $(n^{(d)},m,p)$ EARACs and to protocols beyond the ones considered here. Such extensions may clarify whether stronger theory-independent bounds can be obtained from IC in more general communication settings.

	\begin{acknowledgments}
			We thank Lucas Vieira and Jan Nöller for fruitful discussions. This research was funded by the Deutsche Forschungsgemeinschaft (DFG, German Research Foundation), project number 441423094, the National Research Center for Applied Cybersecurity ATHENE. We acknowledge funding from the QuantERA II Programme, that has received funding from the EU’s H2020 research and innovation programme under the GA No 101017733.
	\end{acknowledgments}

    \section*{Data Availability}
    All data and code associated with the symbolic and numeric computations in this paper are available from a GitHub repository \cite{github_code}.

\onecolumngrid
\appendix
\section*{APPENDIX}
    This appendix contains additional information and details of the theorems in the main text.
\section{Details of the $nn22$ scenario}\label{app:IC_nn22}
In this section we provide proofs of \cref{thm:nn22_ineq} and \cref{thm:Inn22} which we reproduce for convenience below.
\subsection{Details on $nn22$ inequalities from the extended IC principle}\label{app:nn22_ineqs}  
\thmnn*
\begin{proof}
We consider a general class of encoding-decoding protocols for the scenario with $n$ input bits for Alice. In particular, we take the following,	\begin{align}\label{eq:app_nn22_protocol}
		\alpha=f(\vec{a}), \quad \beta=b, \quad x=h(\vec{a})\oplus A, \quad 
		g=x'\oplus B,
	\end{align}
where $f: \{0,1\}^n\to [n]$ and $h: \{0,1\}^n\to \{0,1\}$ are discrete-valued function of the inputs bits $\{a_i\}_{i=0}^{n-1}$, which we arranged as a binary vector $\vec{a}\in \{0,1\}^n$. We choose the communication of Alice to Bob to be over a binary symmetric channel with the probability of transmitting a correct message being $\frac{1+e_c}{2}$, i.e., $\Pr(x'=0 \mid x=0) = \Pr(x'=1 \mid x=1) = \frac{1+e_c}{2}$. Note that there are still other classes of protocols that one can consider for the $nn22$ scenario. However, we leave the analysis of such protocols for future investigations.

We denote by $\mathbb{A}_i=\{a_0,\dots,a_{i-1}\}$ the set of first $i$ input bits. The extended IC statement is given by
\begin{equation}\label{eq:app_nn22_ic}
    \sum_{i=0}^{n-1}I(g;a_i|\mathbb{A}_i,b=i)\leq \mathcal{C}.
\end{equation}
We expand each term in the summation as follows
\begin{align}
I(g;a_i|\mathbb{A}_i,b=i)&=H(g|\mathbb{A}_i,b=i)-H(g|\mathbb{A}_{i+1},b=i)\cr
&=\sum_{\vec{k}\in\{0,1\}^i} \mathrm P (\mathbb{A}_i=\vec{k})H(g|\mathbb{A}_i=\vec{k},b=i)-\sum_{\vec{k}\in\{0,1\}^{i+1}} \mathrm P(\mathbb{A}_{i+1}=\vec{k})H(g|\mathbb{A}_{i+1}=\vec{k},b=i).
\end{align}
To compute the relevant entropies, we write down the joint distribution of Alice's inputs and Bob's guess as follows

\begin{align}\label{eq:app:nn22_joint}
\mathrm{P}(g=j,\mathbb{A}_n=\vec{k}\mid b=i)
&=\mathrm{P}(\mathbb{A}_n=\vec{k})\,
  \mathrm{P}(g=j\mid\mathbb{A}_n=\vec{k},b=i)\nonumber\\
&=\mathrm{P}(\mathbb{A}_n=\vec{k})
  \sum_{m=0}^{n-1}
  \mathrm{P}(\alpha=m\mid\mathbb{A}_n=\vec{k})\,
  \mathrm{P}(g=j\mid\alpha=m,\beta=i)\nonumber\\
&=\mathrm{P}(\mathbb{A}_n=\vec{k})
  \sum_{m=0}^{n-1}
  \delta_{m,f(\vec{k})}\,
  \mathrm{P}\!\left(
  N\oplus(A\oplus B)=h(\vec{k})\oplus j
  \mid \alpha=m,\beta=i
  \right)\nonumber\\
&=\mathrm{P}(\mathbb{A}_n=\vec{k})\,
  \mathrm{P}\!\left(
  N\oplus(A\oplus B)=h(\vec{k})\oplus j
  \mid \alpha=f(\vec{k}),\beta=i
  \right)\nonumber\\
&=\frac{1}{2^n}
  \frac{1+(-1)^{h(\vec{k})\oplus j}e_c e_{f(\vec{k}),i}}{2}, 
\end{align}
Here $N$ is the binary channel-noise variable, defined by $
x'=x\oplus N$, 
$\mathrm{P}(N=0)=\frac{1+e_c}{2}$ and 
$\mathrm{P}(N=1)=\frac{1-e_c}{2}$.
Using the joint distribution, we can then calculate the above two marginals to be
\begin{equation}
p(g=j|\mathbb{A}_i=\vec{k},b=i)=\frac{\sum_{k_{j\geq i}}p(g=j,\mathbb{A}_n=\vec{k}|b=i)}{p(\mathbb{A}_i=\vec{k})}=\frac{1}{2^{n-i}}\sum_{k_{j\geq i}}\frac{1+(-1)^{h(\vec{k})\oplus j}e_c e_{f(\vec{k}),i}}{2},
\end{equation}

\begin{equation}
\mathrm P(g=r|\mathbb{A}_{i+1}=\vec{k},b=i)=\frac{\sum_{k_{j\geq i+1}}\mathrm P(g=r,\mathbb{A}_n=\vec{k}|b=i)}{\mathrm P(\mathbb{A}_{i+1}=\vec{k})}=\frac{1}{2^{n-i-1}}\sum_{k_{j\geq i+1}}\frac{1+(-1)^{h(\vec{k})\oplus r}e_c e_{f(\vec{k}),i}}{2}.
\end{equation}
We define the auxiliary variables $c_i,d_i$ as follows
\begin{equation}
    c_i:=\frac{1}{2^{n-i}}\sum_{k_{j\geq i}}(-1)^{h(\vec{k})}e_{f(\vec{k}),i},\quad  d_i:=\frac{1}{2^{n-i-1}}\sum_{k_{j\geq i+1}}(-1)^{h(\vec{k})}e_{f(\vec{k}),i}
\end{equation}
Using that, the expression for the marginal distributions is simply
\begin{align}
    \mathrm P (g=j|\mathbb{A}_i=\vec{k},b=i)=\frac{1+(-1)^je_cc_i}{2}
    \\
    \mathrm P (g=j|\mathbb{A}_{i+1}=\vec{k},b=i)=\frac{1+(-1)^je_cd_i}{2}
\end{align}
We now divide inequality in \cref{eq:app_nn22_ic} by the capacity on the right hand side and obtain
\begin{equation}
    \sum_{i=0}^{n-1}I_i\leq 1,\quad \text{where } I_i:=\frac{I(g;a_i|\mathbb{A}_i,b=i)}{1-H\left(\frac{1+e_c}{2}\right)}
\end{equation}
Plugging the computed entropies back in the IC statement we have
\begin{align}
    I_i&=\frac{1}{1-H\left(\frac{1+e_c}{2}\right)}\left(\sum_{\vec{k}\in\{0,1\}^i} \mathrm P(\mathbb{A}_i=\vec{k})H(g|\mathbb{A}_i=\vec{k},b=i)-\sum_{\vec{k}\in\{0,1\}^{i+1}} \mathrm P(\mathbb{A}_{i+1}=\vec{k})H(g|\mathbb{A}_{i+1}=\vec{k},b=i)\right)\cr
    &=\frac{1}{1-H\left(\frac{1+e_c}{2}\right)}\left(\sum_{k_{j<i}}\frac{1}{2^{i}}H\left(\frac{1+e_cc_i}{2}\right)-\sum_{k_{j<i+1}}\frac{1}{2^{i+1}}H\left(\frac{1+e_cd_i}{2}\right)\right).
\end{align}
Next, we adhere to the usual method of computing the limit of each $I_i$, when the channel capacity parameter, $e_c\rightarrow 0$, and we denote it by $\tilde{I}_i$,
\begin{align}
   \tilde I_i
   &=
   -\sum_{k_{j<i}}\frac{c_i^2}{2^{i}}
   +\sum_{k_{j<i+1}}\frac{d_i^2}{2^{i+1}} \nonumber\\
   &=
   \frac{1}{2^{2n-i}}
   \sum_{k_{j<i}}
   \left(
   2\sum_{k_i}
   \left(\sum_{k_{j> i}}(-1)^{h(\vec{k})}e_{f(\vec{k}),i}\right)^2
   -
   \left(\sum_{k_{j\geq i}}(-1)^{h(\vec{k})}e_{f(\vec{k}),i}\right)^2
   \right) \nonumber\\
   &=
   \frac{1}{2^{2n-i}}
   \sum_{k_{j<i}}
   \left(
   \sum_{k_i}(-1)^{k_i}
   \sum_{k_{j> i}}(-1)^{h(\vec{k})}e_{f(\vec{k}),i}
   \right)^2 .
\end{align}
Summing over $i$ we finally have the general form for the extended IC inequality for the $nn22$ scenario:
\begin{equation}\label{eq:gen_nn22_ineq}
     \sum_{i=0}^{n-1}2^{i}\sum_{k_{j<i}}\left(\sum_{k_i}(-1)^{k_i}\sum_{k_{j> i}}(-1)^{h(\vec{k})}e_{f(\vec{k}),i}\right)^2\leq 4^n.
  \end{equation}
This finishes the proof of the general family of inequalities.  
\end{proof}

Next, we will prove a specific inequality which will be helpful later in proving the bound on $I_{nn22}$ inequalities in \cref{app:inn22_details}.
\begin{theorem}\label{thm:special_ineq}
    In the $nn22$ Bell scenario, the following inequality follows from the Information Causality principle,  
     \begin{equation}\label{eq:app_res_dd22_relevant}
     \left(2e_{0,0}+\sum_{j=1}^{n-1}2^{j}e_{j,0}\right)^2+\sum_{i=1}^{n-1}2^{i-1}\left(2e_{0,i}+\sum_{j=1}^{n-i}(-1)^{\delta_{j,n-i}}2^{j}e_{j,i}\right)^2\leq 4^{n},
  \end{equation}
where $e_{j,i}\coloneqq 2\Pr(A=B \mid \alpha=j,\beta=i)-1$, for all $i,j\in \{0,1,\dots,n-1\}$, and $e_{n,0}\coloneqq 0$.
    \end{theorem}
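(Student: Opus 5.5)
The plan is to obtain \cref{eq:app_res_dd22_relevant} as a special case of \cref{thm:nn22_ineq}, by a concrete choice of the encoding functions $f$ and $h$ in the protocol \cref{eq:app_nn22_protocol}. No new information-theoretic argument is needed: \cref{eq:gen_nn22_ineq} already holds for every $f$ and $h$. The work is to pick them so that the $n$ squared brackets collapse to the stated linear combinations.

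\textbf{Choice of $f$.} The $i=0$ bracket of \cref{eq:gen_nn22_ineq} contains $2^n$ terms $\pm e_{f(\vec k),0}$. The target $2e_{0,0}+\sum_{j\ge1}2^j e_{j,0}$ also has total weight $2+2+4+\dots+2^{n-1}=2^n$. So $f$ must send exactly two vectors to $0$ and exactly $2^j$ vectors to each $j\ge1$. A natural choice that ignores $k_0$ is
\begin{equation*}
f(\vec k)=n-\min\{t\ge1:k_t=1\}, \qquad f(\vec k)=0 \text{ if } k_1=\dots=k_{n-1}=0 .
\end{equation*}
The vectors whose first $1$ among positions $1,\dots,n-1$ sits at position $t$ number $2\cdot2^{n-1-t}=2^{n-t}$, which is the count required for $j=n-t$.

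\textbf{Choice of $h$.} We take $h(\vec k)=k_0$, possibly corrected by a term depending only on the position of the first $1$ if the sign check below requires it. With $h=k_0$, the product $(-1)^{k_0}(-1)^{h(\vec k)}$ equals $+1$ for every vector. The $i=0$ bracket then becomes exactly $2e_{0,0}+\sum_{j=1}^{n-1}2^je_{j,0}$, and the convention $e_{n,0}=0$ costs nothing here.

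\textbf{Terms with $i\ge1$.} For each outer assignment of $k_0,\dots,k_{i-1}$, there are two cases.
\begin{itemize}
\item If some $k_t=1$ with $1\le t<i$, then $f$ and $h$ do not depend on $k_i$. The alternating sum over $k_i$ therefore vanishes.
\item Otherwise $k_1=\dots=k_{i-1}=0$, and only the two values of $k_0$ survive. Setting $k_i=1$ forces $f=n-i$ for all $2^{n-1-i}$ completions and contributes with sign $-1$; this produces the $(-1)^{\delta_{j,n-i}}$. Setting $k_i=0$ reproduces the $i=0$ counting structure on the remaining positions $i+1,\dots,n-1$. It yields $e_{j,i}$ with multiplicity $2^{j}$ for $j<n-i$, and $e_{0,i}$ twice.
\end{itemize}
Collecting the two surviving $k_0$ values and the prefactor $2^i$, each such term should become $2^{i-1}\bigl(2e_{0,i}+\sum_{j=1}^{n-i}(-1)^{\delta_{j,n-i}}2^je_{j,i}\bigr)^2$, up to the overall power-of-two normalization that is common with the right-hand side $4^n$.

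\textbf{Main obstacle.} The difficulty is purely bookkeeping. The multiplicities must come out exactly as $2^j$ with the weight $2$ on $e_{0,i}$. The factor $2^{i-1}$ must emerge correctly from the outer sum over $k_{j<i}$ together with the squaring. The signs from $(-1)^{k_i}$ and $(-1)^{h}$ must combine to give only the single flip at $j=n-i$. If the naive $h=k_0$ produces the wrong sign pattern, I would first try $h=k_0\oplus[\text{first }1\text{ among }k_1,\dots,k_{n-1}\text{ lies at an even position}]$ or a similar parity correction, and then recheck the $i=0$ term.
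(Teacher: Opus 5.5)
Your route is the same as the paper's. You specialize \cref{eq:gen_nn22_ineq} with $h(\vec k)=k_0$ and a ``first index'' function $f$, and your choice works.

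The paper uses $f(\vec a)=n-1-\sum_{i=1}^{n-1}\prod_{j=1}^{i}(a_0\oplus a_j)$, which equals $n-m$ for the first $m\geq1$ with $a_m=a_0$. Your $f$ is this function after the relabeling $k_t=a_t\oplus a_0\oplus1$ for $t\geq1$. For $i\geq1$ that relabeling only multiplies each inner bracket by the sign $(-1)^{k_0\oplus1}$, so both choices give literally the same inequality. Also, $h=k_0$ already gives the right signs, so no correction is needed. The factor $(-1)^{k_0}$ cancels in the $i=0$ bracket and is a global sign in the $i\geq1$ brackets; the only minus sign comes from $(-1)^{k_i}$ at $k_i=1$. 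The parity correction you suggest would actually put minus signs into the $i=0$ bracket.

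The step that is wrong as written is the count in the branch $k_i=0$, and this is where the whole content lies. For $i\geq1$ the bit $k_0$ belongs to the outer sum, so it is fixed inside each squared bracket. The $2^{n-i-1}$ completions of positions $i+1,\dots,n-1$ then contain $e_{0,i}$ once and $e_{j,i}$ exactly $2^{j-1}$ times for $1\leq j\leq n-i-1$. Your numbers ($2$ and $2^j$, which sum to $2^{n-i}$, more than the available completions) aggregate both values of $k_0$, which is not allowed before squaring. Combined with the $-2^{n-i-1}e_{n-i,i}$ you correctly obtain from $k_i=1$, they would give a bracket that is not proportional to the target. The correct inner sum is
\[
\sum_{k_i}(-1)^{k_i}\sum_{k_{j>i}}(-1)^{k_0}e_{f(\vec k),i}
=\frac{(-1)^{k_0}}{2}\Bigl(2e_{0,i}+\sum_{j=1}^{n-i}(-1)^{\delta_{j,n-i}}2^{j}e_{j,i}\Bigr).
\]
The $i$-th term is therefore $2^i\cdot 2\cdot\frac{1}{4}(\cdots)^2=2^{i-1}(\cdots)^2$. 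The three factors are the prefactor, the two surviving values of $k_0$, and the square of the $\frac12$. This is not an adjustable ``overall normalization'': the $i=0$ term already fixes the scale against $4^n$, so $2^{i-1}$ must come out exactly. With the corrected count it does, and your argument is then complete.
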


\begin{proof}
We now choose a specific protocol where Alice chooses the following encoding for her input and message

\begin{equation}
     f(\vec{a})=n-1-\sum_{i=1}^{n-1}\prod_{j=1}^i(a_0\oplus a_j),\quad  h(\vec{a})=a_0.
\end{equation}

Plugging in the protocol functions in the general inequality in \cref{eq:gen_nn22_ineq} we have
\begin{equation}\label{eq:app_nn22_specifc_ineq}
    \tilde I_i=\frac{1}{2^{2n-i}}\sum_{k_{j<i}}\left(\sum_{k_i}(-1)^{k_i}\sum_{k_{j> i}}(-1)^{k_0}e_{f(\vec{k}),i}\right)^2.
\end{equation}

To tackle the above sum, we first understand the structure of $f$ more carefully. We define $P_i:= \prod_{j=1}^i (a_0 \oplus a_j)$ and immediately notice that $P_i=0$ iff there exists $j$ such that $a_j=a_0$ and $P_i=1$ otherwise. We also notice that $P_i=0\implies P_{i+1}=0$. Thus, if $1\leq m\leq n-1$ denotes the smallest index such that $a_m=a_0$ then $f(\vec{a})=n-1-\sum_{i=1}^{n-1}P_i=n-1-(m-1)=n-m$. If no such $m$ exists, then we have simply $f(\vec{a})=0$. Hence, whenever  the bit $a_{n-r}\neq a_0$ we have $f(\vec{a})=r$.
\\

We treat the case of $i=0$ separately. We notice that in the definition of $f$, the function takes a value `$0$' twice (for strings $\vec{a}=01\dots1,10\dots0$) and similarly it takes the value `$0<j\leq n-1$' when $a_0=a_1=\dots a_{n-j-1}\neq a_{n-j}$. This fixes the first $n-j+1$ bits and leaves $j-1$ bits free. Thus $f(\vec{a})=j$ occurs $2^j$ times, 
\begin{align}
    \tilde I_0&=\frac{1}{2^{2n}}\left(\sum_{k_0}(-1)^{k_0}\sum_{k_{j> 0}}(-1)^{k_0}e_{f(\vec{k}),0}\right)^2=\frac{1}{2^{2n}}\left(\sum_{\vec{k}}e_{f(\vec{k}),0}\right)^2=\frac{1}{2^{2n}}\left(2e_{0,0}+\sum_{j=1}^{n-1}2^{j}e_{j,0}\right)^2.
\end{align}
For the case when $i>0$, we first consider the inner sum in \cref{eq:app_nn22_specifc_ineq}. We notice that $k_0$ is independent of the sums and will vanish when the expression is squared, so we drop it in the inner sum. Let $k_0=c$, now since $\{k_j\}_{j<i}$ are fixed, if any $k_j=k_0$ for $1\leq j<i$, then $f(\vec{k})$ is constant over $k_i$ and $k_{j>i}$, and the factor $\sum_{k_i}(-1)^{k_i}=0$ forces $\tilde I_i=0$. Thus, in the outer sum, the only configuration which survives is of the form $k_1=\dots=k_{i-1}\neq k_0$.

The smallest index $m$ such that $f(\vec{k})=n-m$ has three possibilities. Correspondingly, we split the expression into three cases.

\begin{itemize}
    \item \textbf{Case 1:  $m<i$.} In this case $f(\vec{k})$ is independent of $\{k_j\}_{j>i}$ and thus the inner sum collapses since
    \begin{equation}
        \sum_{k_i}(-1)^{k_i}\sum_{k_{j> i}}e_{f(\vec{k}),i}=\left(\sum_{k_i}(-1)^{k_i}\right)\left(\sum_{k_{j> i}}e_{n-m,i}\right)=0.
    \end{equation}
    \item \textbf{Case 2: $m=i$.} In this case $k_i=k_0=c$ and thus $f(\vec{k})=n-i$. Hence, the inner sum is simply
    \begin{equation}
        \sum_{k_i}(-1)^{k_i}\sum_{k_{j> i}}e_{f(\vec{k}),i}=(-1)^c2^{n-i-1}e_{n-i,i}.
    \end{equation}
    \item \textbf{Case 3: $m>i$.} In this case, $k_i=c\oplus 1$ otherwise we would have $m=i$. Now we split the summation upto the index $m$ and then the rest of the terms 
        \begin{equation}
        \sum_{k_i}(-1)^{k_i}\left(\sum_{k_{j<m}}\sum_{k_{j\geq m}}e_{f(\vec{k}),i}\right)=\sum_{k_i}(-1)^{k_i}\left(\sum_{k_{j<m}}2^{n-m}e_{n-m,i}\right)=(-1)^{c\oplus 1}\left(\sum_{r=1}^{n-1-i}2^{r-1}e_{r,i}+e_{0,i}\right).
    \end{equation}
Adding all these contributions we have the inner sum to be equal to
    \begin{equation}
        \frac{(-1)^{c\oplus 1}}{2}\left(2e_{0,i}+\sum_{j=1}^{n-i-1}2^j e_{j,i}-2^{n-i}e_{n-i,i}\right).
    \end{equation}
\end{itemize}
So squaring and adding the two contributions we finally have
\begin{equation}
    \tilde I_i=\frac{1}{2^{2n-i}}\sum_{c=0}^1\frac{1}{4}\left(2e_{0,i}+\sum_{j=1}^{n-i-1}2^j e_{j,i}-2^{n-i}e_{n-i,i}\right)^2=\frac{1}{2^{2n-i+1}}\left(2e_{0,i}+\sum_{j=1}^{n-i}(-1)^{\delta_{j,n-i}}2^{j}e_{j,i}\right)^2.
\end{equation}

Thus, we finally have the inequality
\begin{equation}
    \left(2e_{0,0}+\sum_{j=1}^{n-1}2^{j}e_{j,0}\right)^2+\sum_{i=1}^{n-1}2^{i-1}\left(2e_{0,i}+\sum_{j=1}^{n-i}(-1)^{\delta_{j,n-i}}2^{j}e_{j,i}\right)^2\leq 4^{n}.
\end{equation}
\end{proof}

\subsection{Details of $I_{nn22}$ upper bound from the extended IC statement }\label{app:inn22_details}
In this section, we derive an improved analytical upper bound on the quantum value of the $I_{nn22}$ expression~\cite{collins2004relevant} for an arbitrary $n$, using the inequalities in \cref{thm:special_ineq}. 
We consider the case of NS-boxes with uniformly distributed marginals, in which case the $I_{nn22}$ Bell inequality takes the form,
\begin{equation}\label{eq:app_Inn22}
    I_{nn22}\coloneqq \frac{-n^2+n-2}{8}+\frac{1}{4}\left(\sum_{i=0}^{n-1}\sum_{j=0}^{n-i-1}e_{j,i}-\sum_{i=1}^{n-1}e_{n-i,i}\right)\leq 0.\end{equation}
For such family of boxes, we prove the following result.
\thmInn*
\begin{proof}

To arrive at the above result, we bound the $I_{nn22}$ expression in \cref{eq:app_Inn22} subject to the inequality derived in \cref{thm:special_ineq}. So, we need to solve the following optimization problem,
\begin{equation}\label{eq:app_Inn22_opt_1}
    \begin{aligned}
\max_{e_{j,i}} \quad & \sum_{i=0}^{n-1}\sum_{j=0}^{n-i-1}e_{j,i}-\sum_{i=1}^{n-1}e_{n-i,i},\\
\textrm{s.t.} \quad & \left(2e_{0,0}+\sum_{j=1}^{n-1}2^{j}e_{j,0}\right)^2+\sum_{i=1}^{n-1}2^{i-1}\left(2e_{0,i}+\sum_{j=1}^{n-i}(-1)^{\delta_{j,n-i}}2^{j}e_{j,i}\right)^2\leq 4^{n},\\
  &-1\leq e_{j,i}\leq 1,\quad \forall (j,i)\in [\text{all}],
\end{aligned}
\end{equation}
where
$
[\mathrm{all}] \coloneqq \{(j,0):0\leq j\leq n-1\} \cup \{(j,i):1\leq i\leq n-1,\;0\leq j\leq n-i\} $
is the set of indices of all optimization variables appearing in \cref{eq:app_Inn22_opt_1}. As a first step, we perform a change of variables $e_{j,i}\to -e_{j,i}$ for $i+j=n$, $i\in\{1,2\dots,n-1\}$, bringing the problem in Eq.~\eqref{eq:app_Inn22_opt_1} to, 
\begin{equation}\label{eq:inn22_opt_problem}
    \begin{aligned}
\max_{e_{j,i}} \quad & \sum_{i=0}^{n-1}\sum_{j=0}^{n-i-1}e_{j,i}+\sum_{i=1}^{n-1}e_{n-i,i}\\
\textrm{s.t.} \quad & \left(2e_{0,0}+\sum_{j=1}^{n-1}2^{j}e_{j,0}\right)^2+\sum_{i=1}^{n-1}2^{i-1}\left(2e_{0,i}+\sum_{j=1}^{n-i}2^{j}e_{j,i}\right)^2 \leq 4^{n}\\
  &-1\leq e_{j,i}\leq 1,\quad \forall (j,i)\in [\text{all}]. 
\end{aligned}
\end{equation}
We also introduce the following notation,
\begin{equation}\label{eq:app_inn22_Si}
    S_0\coloneqq 2e_{0,0}+\sum_{j=1}^{n-1}2^{j}e_{j,0},\quad  S_i\coloneqq \left(2e_{0,i}+\sum_{j=1}^{n-i}2^{j}e_{j,i}\right),\; i\in \{1,2,\dots,n-1\}.
\end{equation}

We define a function $\mathcal{B}(n)= n(n+3)/2+2\sqrt{2+2^{4-n}}-7$ and our objective $\mathcal L=\sum_{i=0}^{n-1}\sum_{j=0}^{n-i-1}e_{j,i}+\sum_{i=1}^{n-1}e_{n-i,i}$. Now consider the following decomposition
\begin{equation}
    \mathcal{B}(n)-\mathcal L=\sum_{i=0}^{n-1}\sum_{j=0}^{n-i-1}a_{ji}(1-e_{ji})+\sum_{i=1}^{n-1}b_{i}(1-e_{n-i,i})+\lambda \sum_{i=0}^{n-1}c_i\left(S_i-t_i\right)^2+\lambda(4^n-S_0^2-\sum_{i=1}^{n-1}2^{i-1}S_i^2),
\end{equation}
where the parameters are given by
\begin{align}
    c_0&=1,\quad c_i=2^{i-1} \quad \forall  i\in \{1,\dots,n-1\},\quad x=-1+\sqrt{\frac{1}{2}+\frac{1}{2^{n-2}}}, \cr
    t_0&=t_1=t_2=2^{n-1}(1+x),\quad t_i=2^{n-i+1} \quad  \forall  i\in \{3,\dots,n-1\},\cr
    \lambda&=\frac{1}{2^nt_0},\quad
    a_{00}=1-4\lambda t_0,\quad a_{j0}=1-2^{j+1}\lambda t_0 \quad \forall  j\in \{1,\dots,n-1\},\cr 
    a_{0i}&=1-2^{i+1}\lambda t_i,\quad a_{ji}=1-2^{i+j}\lambda t_i \quad \forall i\in \{1,\dots,n-1\}, j\in \{1,\dots,n-i\},\cr
    b_i&=1-2^n\lambda t_i \quad \forall  i\in \{1,\dots,n-1\}.
\end{align}
The value of these coefficients were inspired by the solution of the dual variables from the stationarity equations of the KKT conditions. The particular value of $x$ is chosen so that the constant term in the decomposition equals $\mathcal B(n)$. Plugging these values in the above expression, one can straightforwardly check that the equality holds as follows: we split the right hand side into constant terms and terms involving the biases. It is easy to check that all the quadratic terms cancel each other and the coefficient of the biases match exactly the left hand side. Therefore, all that remains is to check the constant terms which are given by
\begin{align}
    \lambda\left(t_0^2+\sum_{i=1}^{n-1}2^{i-1}t_i^2+4^n\right)+\sum_{i=0}^{n-1}\sum_{j=0}^{n-i-1}a_{ji}+\sum_{i=1}^{n-1}b_{i}&=\frac{1}{2}\left(n^2+3n-2+2^n\lambda[2^n(2x^2-2x-3)+8]\right)\cr&=n(n+3)/2+2\sqrt{2+2^{4-n}}-7=\mathcal{B}(n).
\end{align}
Further, we can see that on the right hand side, since each parameter is non-negative, every term is manifestly non-negative and thus
\begin{equation}
    \mathcal{B}(n)-\mathcal L\geq 0 \implies I_{nn22}\leq \frac{n-4}{2}+\sqrt{\frac{1}{2}+\frac{1}{2^{n-2}}}.
\end{equation}
\end{proof}

\subsection{On the bounding power of the extended IC statement and the derived inequalities}\label{app:iso_nn22}

In the previous section, we used the quadratic inequalities derived from the extended IC statement to bound the $I_{nn22}$ expression. Alternatively, one could have used the IC statement itself, i.e., the logarithmic inequality, as a constraint to derive a bound. In principle, this may provide better bounds because, when extracting polynomial inequalities from the logarithmic ones, one might lose tightness, since the resulting inequalities are obtained in the zero-capacity limit. From \cite{miklin2021information}, it is known that for the $d2dd$ scenario, the tightest bounds occur at a point of nonzero capacity. On the other hand, in \cite{our_prl}, it was shown that in the $nn22$ scenario, for a wide class of protocols, the quadratic inequalities provide an exact characterization of the IC set, i.e., the derived inequalities are as tight as the original logarithmic constraint.

In this section, rather than tackling the same problem generally for the extended IC statement, we demonstrate that the quadratic inequalities are as good as the original extended IC inequality in an isotropic family. We work in a simplified isotropic version of the $I_{nn22}$ scenario, where we consider a family of PR boxes with biases defined by
\begin{equation}
    e_{ij}=(1-2\delta_{i+j,n})e.
\end{equation}

For this family of boxes, the $I_{nn22}$ expression defined in \cref{eq:app_Inn22} takes the affine form
\begin{equation}\label{eq:ison22}
    I^{\mathrm{iso}}_{nn22}
    =
    \frac{-n^2+n-2}{8}
    +
    \frac{n^2+3n-2}{8}e.
\end{equation}

Thus, to bound the $I^{\mathrm{iso}}_{nn22}$ expression, it suffices to bound the bias $e$. For isotropic boxes, the inequality derived in \cref{thm:special_ineq} simplifies to
\begin{equation}\label{eq:ineqison22_extended}
    (3-2^{2-n})e^2\leq 1.
\end{equation}
Therefore,
\begin{equation}\label{eq:ison22_bound}
     e\leq\frac{1}{\sqrt{3-2^{2-n}}}.
\end{equation}

For large $n$, this bound approaches $1/\sqrt 3$ from above. We now calculate the same bound from the logarithmic inequality in \cref{eq:app_nn22_ic}. That is, for a fixed channel parameter $e_c$, we find the critical value of $e$ at which the IC inequality is saturated. Sweeping over the parameter range $0\leq e_c\leq 1$, we obtain a curve of critical values of $e$. We repeat this for $n=3,\dots,10$ and in each case find that the critical bound occurs in the limit of vanishing $e_c$, with values matching \cref{eq:ison22_bound} up to numerical precision; see \cref{fig:nn22_plot}.

This provides numerical evidence that, in this isotropic family, the constraints derived from the quadratic inequalities are as strong as the logarithmic extended-IC constraints. This is in contrast to the $d2dd$ scenario, where the tightest constraints occur at a point of nonzero channel capacity.

\begin{figure}[h]
    \centering
\includegraphics{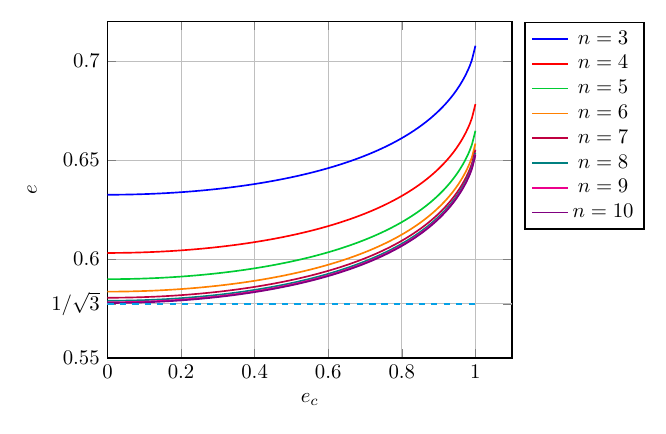}
    \caption{Plot of critical values of $e$ vs. the channel parameter $e_c$ for $n=3,\dots,10$.}
    \label{fig:nn22_plot}
\end{figure}

\section{Details on $d2dd$ inequalities}
\subsection{Details of $d2dd$ inequalities}\label{app:d2dd_details}
		In this section, we provide additional details and the proof of \cref{thm:corr_ineq}. We reproduce below the probability distributions for Alice's inputs, the channel and the NS-box from \cref{eq:input_dist} as follows,
	       \begin{align}
		\mathrm{P}(a_0=k,a_1=l)&=\frac{1+q_{kl}}{d^2}\cr
		\mathrm{P}(x'=x\oplus m)&=\frac{1+(d-1)e_{c}^m}{d}\cr
		\mathrm{P}(A\oplus B=k|\alpha=i,\beta=j)&=\frac{1+(d-1)e^k_{ij}}{d}.
	       \end{align}
        We assume the van Dam protocol generalised for dits i.e.
           \begin{equation}
               \alpha=\overline{a_0}\oplus a_1,\quad\beta=b,\quad x=a_0\oplus A,\quad g=x'\oplus B.
        \end{equation}
        From the above information, we can compute the joint distribution of Alice's dits and Bob's guess conditioned on $b$. The procedure to obtain the joint distribution is very similar to the derivation for the uncorrelated case in \cite{our_prl}. The joint distribution for the van Dam protocol is given by,
		\begin{equation}\label{eq:joint_dist}
			\mathrm P(g=j,a_0=k,a_1=l|b=i)=\mathrm P(a_0=k,a_1=l)\sum_{m=0}^{d-1}\frac{1+(d-1)e^m_c}{d}\frac{1+(d-1)e^{j\oplus\overline{k}\oplus\overline{m}}_{\overline{k}\oplus l, i}}{d}.
		\end{equation}
		We now choose an input distribution such that Alice's dits are correlated and that their marginal distribution is uniform,
		\begin{equation}
			\sum_{k=0}^{d-1}q_{kl}=\sum_{l=0}^{d-1}q_{kl}=0.
		\end{equation}
		Now the statement of extended IC principle reads as follows,
		\begin{align}
			I(a_0;g \mid b=0)+I(a_1;g \mid b=1,a_0)\leq \mathcal{C}. 
		\end{align}
		Now we calculate the relevant marginal distributions and plug them into the IC equation. To make the structure clearer, we introduce the following notation
		
		\begin{align}
			f_{jkl}^i=\frac{(d-1)^2}{d}\sum_{m=0}^{d-1}e^m_ce^{j\oplus\overline{k}\oplus\overline{m}}_{\overline{k}\oplus l,i},\quad h^i_{jkl}=q_{kl}+(1+q_{kl})f^i_{jkl},\quad
			  h^i_{jk}=\frac1d\sum_{l=0}^{d-1}h^i_{jkl},\quad h^i_{j}=\frac{1}{d^2}\sum_{k,l=0}^{d-1}h^i_{jkl}.
		\end{align}
		Thus, we can write the joint distribution and the marginals as
		\begin{align}
			&\mathrm P(g=j,a_0=k,a_1=l|b=i)=\frac{1+h^i_{jkl}}{d^3}, \cr 
			&\mathrm P(g=j,a_0=k|b=i)=\frac{1+h^i_{jk}}{d^2},\quad \mathrm P(g=j|b=i)=\frac{1+h^i_{j}}{d}
		\end{align}
		Now the LHS of the IC statement looks as follows,
		
		\begin{align}
			\hspace{-1cm}I(a_0;g \mid b=0)+I(a_1;g \mid b=1,a_0) &= H(a_0,a_1)+H(g|b=0)+H(g,a_0|b=1)-H(g,a_0|b=0)-H(g,a_0,a_1|b=1)\cr
			&=-\frac{1}{d^2}\sum_{k,l=0}^{d-1}(1+q_{kl})\ln(1+q_{kl})-\frac{1}{d}\sum_{j=0}^{d-1}(1+h^0_{j})\ln(1+h^0_{j})\cr
			&-\frac{1}{d^2}\sum_{j,k=0}^{d-1}(1+h^1_{jk})\ln(1+h^1_{jk})+\frac{1}{d^2}\sum_{j,k=0}^{d-1}(1+h^0_{jk})\ln(1+h^0_{jk})\cr
			&+\frac{1}{d^3}\sum_{j,k,l=0}^{d-1}(1+h^1_{jkl})\ln(1+h^1_{jkl}).
		\end{align}
		The RHS is simply the capacity of the communication channel and is given by
		\begin{equation}
			\mathcal{C}=\ln d+\sum_{m=0}^{d-1}\frac{1+(d-1)e^m_c}{d}\ln \left(\frac{1+(d-1)e^m_c}{d}\right).
		\end{equation}
		Now we make the replacement $e^m_c\rightarrow \gamma e^m_c$. This essentially causes the tensors $h^i_{jkl},h^i_{jk}$ and $h^i_{j}$ to pick an additional factor of $\gamma$ in front of the corresponding $f^i_{jkl}$ terms. We now notice in the limit $\gamma\rightarrow 0$ the capacity vanishes. After driving the capacity to zero and using the L'Hopital's rule twice (the first derivative also approaches zero as we take the limit of $\gamma\rightarrow 0$),we get the following inequality (see \cite{our_prl} for the exact algorithm),
		\begin{equation}
			-\sum_{j=0}^{d-1}(h^0_{j})^2+\frac{1}{d}\sum_{j,k=0}^{d-1}(h^0_{jk})^2-(h^1_{jk})^2+\frac{1}{d^2}\sum_{j,k,l=0}^{d-1}(1+q_{kl})(f^1_{jkl})^2\leq (d-1)^2 \sum_{m=0}^{d-1}(e^m_c)^2.
		\end{equation}
This completes the proof.
        
\subsection{Details of the correlation slice}\label{app:slice_details} 

We provide the details of the inequalities and the correlation slice used for the comparison in \cref{fig:comparison}. We first reproduce the original IC inequalities from \cite{gachechiladze2022quantum,our_prl}.

\begin{theorem}[From \cite{gachechiladze2022quantum}]\label{res:d2dd}
In the $d2dd$ scenario, the following family of inequalities follows from the Information Causality principle:
\begin{align}
\sum_{i=0}^{1}
\left|
\sum_{j=0}^{d-1}\sum_{k=0}^{d-1}
\tilde e^{i\cdot j\oplus k}_{j,i}\omega^{k\cdot l}
\right|^2
\leq d^4,
\qquad
\forall l\in \{1,\dots,\lfloor d/2\rfloor\},
\end{align}
where
\[
\tilde e^{k}_{j,i}
=
d\,\mathrm{P}(A\oplus B=k \mid \alpha=j,\beta=i)-1,
\]
for $k,j\in [d]$ and $i\in\{0,1\}$, $\omega=e^{2\pi \mathrm{i}/d}$ is the $d$th root of unity, and $\oplus$ denotes addition modulo $d$.
\end{theorem}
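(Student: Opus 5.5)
We would follow the same route as in the proof of \cref{thm:corr_ineq}, but using the original IC statement $I(a_0;g\mid b=0)+I(a_1;g\mid b=1)\leq\mathcal C$. Alice's dits are independent and uniformly distributed ($q_{kl}=0$). We use the generalized van Dam protocol $\alpha=\overline{a_0}\oplus a_1$, $\beta=b$, $x=a_0\oplus A$, $g=x'\oplus B$, together with a symmetric channel with biases $e_c^m$.

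\textbf{Step 1: reduce each term to a single distribution.} With independent uniform inputs, each term $I(a_i;g\mid b=i)$ equals $H(g\ominus a_i\mid b=i)$ deficit from $\ln d$, i.e.\ it is governed by the distribution of the error $g\ominus a_i$.
\begin{itemize}
\item For $b=0$, this error is $N\oplus(A\oplus B)$ with $\alpha$ uniformly random.
\item For $b=1$, since $a_1=a_0\oplus\alpha$, the error is $N\oplus(A\oplus B)\ominus\alpha$, again with $\alpha$ uniform.
\end{itemize}
Writing $D_i(k):=\frac1d\sum_{j}\tilde e^{\,i\cdot j\oplus k}_{j,i}$ for the averaged box bias (this shift by $i\cdot j$ is exactly the indexing appearing in the claim), the bias vector of $g\ominus a_i$ is, up to a constant factor, the cyclic convolution $(e_c * D_i)(k)$.

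\textbf{Step 2: pass to the zero-capacity limit.} We replace $e_c^m\to\gamma e_c^m$ and let $\gamma\to0$. Both sides vanish, together with their first derivatives. Applying L'Hospital twice, as in \cite{our_prl}, turns each entropy into minus half the sum of squared biases. This yields the quadratic constraint
\[
\sum_{i=0}^{1}\sum_{k=0}^{d-1}\bigl|(e_c*D_i)(k)\bigr|^2\;\leq\;\kappa_d\sum_{m=0}^{d-1}(e_c^m)^2 ,
\]
where $\kappa_d$ is an explicit normalization constant coming from the $(d-1)$ and $1/d$ prefactors in \cref{eq:input_dist}. This step is mechanical and is the same computation as in \cref{app:d2dd_details} with $q_{kl}=0$.

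\textbf{Step 3: diagonalize by Fourier transform.} We take the discrete Fourier transform over $\mathbb Z_d$. By Parseval and the convolution theorem, the constraint becomes
\[
\sum_{l}|\hat e_c(l)|^2\bigl(|\hat D_0(l)|^2+|\hat D_1(l)|^2\bigr)\;\leq\;\kappa_d' \sum_l|\hat e_c(l)|^2 .
\]
The $l=0$ mode vanishes because all biases sum to zero. The remaining freedom in the channel is used to isolate individual Fourier modes. We choose $e_c^m\propto\cos(2\pi ml/d)$, which is real, sums to zero, and gives a valid channel after the $\gamma$-scaling. Its spectrum is supported on $\{l,-l\}$. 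Since $D_i$ is real, $|\hat D_i(-l)|=|\hat D_i(l)|$, so both sides reduce to a single mode and we get $|\hat D_0(l)|^2+|\hat D_1(l)|^2\leq\kappa_d''$ for each $l\in\{1,\dots,\lfloor d/2\rfloor\}$. Substituting $\hat D_i(l)=\frac1d\sum_{j,k}\tilde e^{\,i\cdot j\oplus k}_{j,i}\omega^{kl}$ then gives the stated form.

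\textbf{Expected obstacle.} The main difficulty is bookkeeping rather than ideas. It consists of tracking the constants through the second-order expansion and through the Fourier transform so that the right-hand side comes out exactly as $d^4$, and of checking the case $l=d/2$ for even $d$. In that case the cosine channel has $\hat e_c$ supported on a single self-conjugate mode, so the factor of two from $\pm l$ is absent on both sides and must be handled separately. To fix the normalization, we would check it at the end against the PR box, where $\tilde e=d-1$ on the correct residue, and verify that the inequality is saturated or violated as expected.
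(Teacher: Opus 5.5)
Your proposal is correct and follows essentially the standard derivation of this result: the paper only quotes it, but uses the same generalized van Dam protocol with a zero-capacity second-order expansion for \cref{thm:corr_ineq}, and your Fourier-mode channels $e_c^m\propto\cos(2\pi ml/d)$ diagonalize the circulant quadratic form. With $\hat f(l)=\sum_k f(k)\omega^{kl}$ your constants come out as $\kappa_d=\kappa_d'=\kappa_d''=d^2$, giving exactly $d^4$, and the case $l=d/2$ needs no separate treatment because the mode multiplicity appears identically on both sides. One side remark should be corrected: Step 2 is not the computation of \cref{app:d2dd_details} at $q_{kl}=0$, because that appendix uses the extended term $I(a_1;g\mid b=1,a_0)$, which yields a different quadratic form, whereas your Step 1 correctly works with the original term $I(a_1;g\mid b=1)$.
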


The theorem above is stated using the bias convention of \cite{gachechiladze2022quantum}. In the computations below, we use the normalized convention of \cref{eq:input_dist}, namely
\[
e^k_{ij}
=
\frac{
d\,\mathrm{P}(A\oplus B=k\mid \alpha=i,\beta=j)-1
}{d-1}.
\]

We compare the above inequality with the inequality derived in \cref{thm:corr_ineq} on the same correlation slice used in the main text. The relevant extremal nonlocal and local distributions are
\begin{align}
\mathrm{P}^{\mathrm{PR}}_{ijk}(A,B\mid\alpha,\beta)
&=
\frac{1}{d}
\delta_{A\oplus B,\alpha\cdot\beta\oplus i\cdot\alpha\oplus j\cdot\beta\oplus k},\\
\mathrm{P}^{\mathrm{L}}_{ijkl}(A,B\mid\alpha,\beta)
&=
\delta_{A,i\cdot \alpha\oplus j}
\delta_{B,k\cdot \beta\oplus l}.
\end{align}
We choose the following specific mixture of one PR box, a symmetrized family of local boxes, and white noise:
\begin{align}\label{eq:app_slice_dist}
\mathrm{P}
=
(1-s-t)\mathrm{P}^{\mathrm{PR}}_{000}
+
\frac{s}{d}\sum_{r=0}^{d-1}\mathrm{P}^{\mathrm{L}}_{2r0\overline r}
+
t\,\mathrm{P}^{\mathrm{wn}},
\end{align}
where $\mathrm{P}^{\mathrm{wn}}(A,B\mid\alpha,\beta)=\frac{1}{d^2}$, $s,t\geq0$, $s+t\leq1$, and $\overline r\equiv -r \pmod d$ denotes the additive inverse of $r$.

The normalized biases are computed as
\begin{equation}
e^k_{ij}
=
\frac{
d\sum_{a=0}^{d-1}
\mathrm{P}(A=a,B=\overline a\oplus k\mid\alpha=i,\beta=j)-1
}{d-1}.
\end{equation}
We choose a specific channel by setting the biases according to
\begin{align}
e_c^m=
\begin{cases} 
-\frac{1}{d-1}, & m< \frac{d-1}{2},\\
\frac{1}{d-1}, & m> \frac{d-1}{2},\\
0, & \text{otherwise.}
\end{cases}
\end{align}
For Alice's input distribution, we choose
\begin{equation}
\mathrm{P}(a_0=k,a_1=l)=\frac{1+q_{kl}}{d^2},
\qquad
q_{kl}
=
\epsilon\,\operatorname{Re}\!\left[
\frac{1}{d}
+
\exp\!\left(
\frac{2\pi \mathrm{i}(k+1)(l+1)}{d+1}
\right)
\right].
\end{equation}

We fix $d=3$. After computing the biases and substituting the channel parameters and input distribution, we obtain the following inequalities:
\begin{align}\label{eq:app_slice_ineq}
I_{\mathrm{original}}:\quad
&- 1 + 3 s^2 + 4 s (t-1) + 2 (t-1)^2\leq 0,\\
I_{\mathrm{correlated}}:\quad
& - 1 + 3 s^2 + 4 s (t-1) + 2 (t-1)^2 \nonumber\\
&\quad
-\frac{1}{9} \epsilon \left[
2 (t-1)^2\epsilon
+2 s (t-1) (\epsilon+6)
+s^2 (\epsilon+12)
\right] \leq0.
\end{align}
To optimize over $\epsilon$, we compute the envelope of the correlated inequalities. Since the inequality is quadratic in $\epsilon$, the boundary of the envelope is obtained by eliminating $\epsilon$, equivalently by setting the discriminant of the quadratic polynomial in $\epsilon$ to zero. This gives
\begin{equation}
I_{\mathrm{optimized}}:\quad
7 s^4 + 18 s^3 (t-1)
+s^2 (20 t^2 - 40 t-19)
+2 s (6 t^3 - 18 t^2+ 17 t  -5 )
+2 (t-1)^2 ( 2 t^2-4t+1)\leq0.
\end{equation}

We plot both inequalities in this slice of the correlation space and observe that the correlated inequality is strictly stronger than the original IC inequality; see \cref{fig:comparison}.

\section{Proof for dit EARAC bounds from IC}\label{app:rac_bounds}
		We begin by specifying the scenario, Alice receives $n$ dits which are uniformly distributed and Bob receives $b\in[n]$. Since we are working in the case of the ``unbiased errors case", the probability of guessing the dit $a_i$ is independent of $i$ and thus, all the mutual information terms in the IC statement are identical \cite{miklin2021information}, which we denote by  
		\begin{equation}
			p=\frac{1}{n}\sum_{i=0}^{n-1}\text{P}(g=a_i|b=i)=\frac{1+(d-1)e_ce}{d}.
		\end{equation}
        
		Using Fano's inequality we have
		\begin{equation}
			n \left(\log d -h\left(\frac{1+(d-1)ee_c}{d}\right)-\frac{(d-1)(1-ee_c)}{d}\log(d-1)\right)\leq \sum_{i=0}^{n-1}I(g;a_i|b=i),
		\end{equation}
        where $h(x)=-x\log x-(1-x)\log(1-x)$ is the binary entropy function.
        
		We use a symmetric unbiased channel for which with probability $p_c=\frac{1+(d-1)e_c}{d}$ the message symbol is unchanged and with a probability $1-p_c=\frac{(d-1)(1-e_c)}{d}$ it changes into one of the other $d-1$ symbols. For such a channel we can express the capacity and divide both sides of the inequality by it, 
		\begin{align}
			n\frac{ \log d -h\left(\frac{1+(d-1)ee_c}{d}\right)-\frac{(d-1)(1-ee_c)}{d}\log(d-1)}{ \log d -h\left(\frac{1+(d-1)e_c}{d}\right)-\frac{(d-1)(1-e_c)}{d}\log(d-1)}\leq 1.
		\end{align}

		Taking the limit $e_c\rightarrow0$ can be computed by applying L’Hôpital’s rule twice, resulting in
		\begin{equation}
			e\leq\frac{1}{\sqrt n}.
		\end{equation}
		Therefore, the winning probability satisfies
\begin{equation}
    p\leq
    \frac{1}{d}\left(1+\frac{d-1}{\sqrt n}\right).
\end{equation}

\section{Details of optimality of EARAC bounds}\label{app:optimality_proof}
    
    \subsection{Proof of \cref{thm:corr_weak}}\label{app:optimality_proof1}

    We choose an isotropic family of PR-boxes with biases defined by
    \begin{equation}
        e^k_{ij}=[k= i\cdot j]e-[k\neq i\cdot j]\frac{e}{d-1}.
    \end{equation}

    We assume Alice has an input distribution of the form $\mathrm P(a_0=k,a_1=l)=\frac{1+\epsilon q_{kl}}{d^2}$. We work in the standard scenario where Alice and Bob use the generalized van Dam protocol. We then  compute the joint distribution and some relevant marginal distributions for the isotropic family of boxes from \cref{eq:joint_dist} below
    \begin{align}\label{eq:joint_dist_isotropic}
        \mathrm P(g=j|a_0=k,a_1=l,b=i)&=\sum_{m=0}^{d-1}\frac{1+(d-1)e^m_c}{d}\frac{1+(d-1)e^{j\oplus\overline{k}\oplus\overline{m}}_{\overline{k}\oplus l, i}}{d}\cr
        &=\frac{1+(d-1)ee_c^{\overline{k}\oplus j\oplus \overline{(\overline{k}\oplus l})\cdot i}}{d}=\begin{cases}
            \frac{1+(d-1)ee_c^{j\oplus \overline{k}}}{d} & i=0,\\
            \frac{1+(d-1)ee_c^{j\oplus \overline{l}}}{d} & i=1.
        \end{cases}\cr
       \mathrm P(g=j|a_0=k,b=i)&=\begin{cases}
            \frac{1+(d-1)ee_c^{j\oplus \overline{k}}}{d} & i=0,\\
            \frac{1+(d-1)e \epsilon c_{jk}}{d} & i=1,
        \end{cases}\cr
       \mathrm P(g=j|b=0)&= \frac{1+(d-1)e \epsilon b_j}{d},
    \end{align}
    where we define $q_k= \sum_{l=0}^{d-1}q_{kl}$, $\tilde q_l= \sum_{k=0}^{d-1}q_{kl}$, $b_j=\frac{1}{d^2}\sum_{k=0}^{d-1}q_{k}e_c^{j\oplus \overline{k}}$ and $c_{jk}=
\frac{\sum_{l=0}^{d-1}q_{kl}e_c^{j\oplus \overline{l}}}{d+\epsilon q_k}$ as some auxiliary variables. We define the total information as  $I_\epsilon(e,\{e_c^i\}_{i=0}^{d-1}):=I(g;a_0|b=0)+I(g;a_1|b=1,a_0)$. Expanding it in terms of the relative entropies between the variables we obtain 
    \begin{align}
    I_\epsilon(e,\{e_c^i\}_{i=0}^{d-1})&=I(g;a_0|b=0)+I(g;a_1|b=1,a_0)\cr
    &=H(g|b=0)-H(g|a_0,b=0)+H(g|a_0,b=1)-H(g|a_0,a_1,b=1)\cr
    &=H(g|b=0)-\sum_{k=0}^{d-1} \mathrm P(a_0=k)\left(H(g|a_0=k,b=0)-H(g|a_0=k,b=1)\right)\cr&-\sum_{k,l=0}^{d-1}P(a_0=k,a_1=l)H(g|a_0=k,a_1=l,b=1).
    \end{align}
    Plugging in the distributions from above we have
    \begin{align}
        I_\epsilon(e,\{e_c^i\}_{i=0}^{d-1})&=H\left(\frac{1+(d-1)e \epsilon b_j}{d}\right)-\sum_{k,l=0}^{d-1}\frac{1+\epsilon q_{kl}}{d^2}H\left(\frac{1+(d-1)ee_c^{j\oplus \overline{l}}}{d}\right)\cr
        &-\sum_{k=0}^{d-1}\frac{d+\epsilon q_k}{d^2}\left[H\left(\frac{1+(d-1)ee_c^{j\oplus \overline{k}}}{d}\right)-H\left(\frac{1+(d-1)e\epsilon c_{jk}}{d}\right)\right],
    \end{align}
    where $H(p^j)$ (with some abuse of notation) refers to $H(\{p^j\}_{j=0}^{d-1})=-\sum_j p^j \log_2 p^j$ i.e. the Shannon entropy of the distribution $\{p^j\}_{j=0}^{d-1}$ where $j$ is the index to be summed over. Putting $\epsilon=0$ we have,
    \begin{equation}
        I_0(e,\{e_c^i\}_{i=0}^{d-1})=2\left[\log_2 d -H\left(\frac{1+(d-1)ee_c^j}{d}\right)\right]
    \end{equation}

    If we now consider the difference between the correlated and uncorrelated total information, we have
\begin{align}
\hspace{-1cm}
&I_\epsilon(e,\{e_c^i\}_{i=0}^{d-1})
-
I_0(e,\{e_c^i\}_{i=0}^{d-1}) \nonumber\\
&=
H\left(\frac{1+(d-1)e\epsilon b_j}{d}\right)
-\frac{\epsilon}{d^2}\sum_{l=0}^{d-1}
\tilde q_l
H\left(\frac{1+(d-1)ee_c^{j\oplus \overline l}}{d}\right) \nonumber\\
&\quad
-\frac{\epsilon}{d^2}\sum_{k=0}^{d-1}
q_k
H\left(\frac{1+(d-1)ee_c^{j\oplus \overline k}}{d}\right)
+\sum_{k=0}^{d-1}
\frac{d+\epsilon q_k}{d^2}
H\left(\frac{1+(d-1)e\epsilon c_{jk}}{d}\right)
-2\log_2 d .
\label{eq:app_thm4_o1}
\end{align}

Since the entropy of a probability vector is invariant under cyclic shifts of its entries, the second and third terms vanish after summing over $l$ and $k$, respectively. Indeed, the corresponding entropy terms are independent of $l$ and $k$, while
\[
\sum_{l=0}^{d-1}\tilde q_l=0,
\qquad
\sum_{k=0}^{d-1}q_k=0.
\]
Therefore,
\begin{align}
&I_\epsilon(e,\{e_c^i\}_{i=0}^{d-1})
-
I_0(e,\{e_c^i\}_{i=0}^{d-1}) \nonumber\\
&=
\left(
H\left(\frac{1+(d-1)e\epsilon b_j}{d}\right)-\log_2 d
\right)
+
\left(
\sum_{k=0}^{d-1}
\frac{d+\epsilon q_k}{d^2}
H\left(\frac{1+(d-1)e\epsilon c_{jk}}{d}\right)
-\log_2 d
\right).
\end{align}
Using concavity of entropy, we can simplify the second summand as

\begin{align}
&\sum_{k=0}^{d-1}
\frac{d+\epsilon q_k}{d^2}
H\left(\frac{1+(d-1)e\epsilon c_{jk}}{d}\right) \leq
H\left(
\sum_{k=0}^{d-1}
\frac{d+\epsilon q_k}{d^2}
\frac{1+(d-1)e\epsilon c_{jk}}{d}
\right) =
H\left(
\frac{
1+(d-1)e\epsilon
\left(
\frac{1}{d^2}
\sum_{k,l=0}^{d-1}q_{kl}e_c^{j\oplus\overline l}
\right)
}{d}
\right).
\label{eq:app_thm4_o2}
\end{align}

Substituting this bound into the expression for
$I_\epsilon-I_0$, we obtain
\begin{align}
&I_\epsilon
-
I_0
\leq
\left(
H\left(\frac{1+(d-1)e\epsilon b_j}{d}\right)-\log_2 d
\right) \quad+
\left(
H\left(
\frac{
1+(d-1)e\epsilon
\left(
\frac{1}{d^2}
\sum_{k,l=0}^{d-1}q_{kl}e_c^{j\oplus\overline l}
\right)
}{d}
\right)
-\log_2 d
\right).
\end{align}
Since the Shannon entropy of any $d$-outcome probability distribution is at most $\log_2 d$, both terms in parentheses are nonpositive. Therefore,
\begin{equation}
I_\epsilon(e,\{e_c^i\}_{i=0}^{d-1})
\leq
I_0(e,\{e_c^i\}_{i=0}^{d-1}).
\end{equation}
Thus, introducing correlations among Alice's inputs cannot increase the total information for isotropic NS boxes, and the strongest bound is obtained for the uncorrelated input distribution.

\subsection{Proof of \cref{thm:opt_channel}}\label{app:optimality_proof2}
		We will now prove that the optimum bound on $e$ occurs for a uniform error channel. We begin by writing down the explicit form of the IC inequality for the uncorrelated case as a function of the box biases and the channel parameters i.e. we define $f(e,\{e_c^i\}_{i=0}^{d-1}):=I_0(e,\{e_c^i\}_{i=0}^{d-1})-\mathcal{C}$ as follows,
		\begin{equation}\label{eq:ic_mixing_uncorrelated}
			f(e,\{e_c^i\}_{i=0}^{d-1})=\frac{1}{d\log 2}\sum_{i=0}^{d-1} \left[ 2(1+(d-1)e e_{c}^i)\log(1+(d-1)e e_{c}^i)-(1+(d-1)e_{c}^i)\log(1+(d-1) e_{c}^i)\right].
		\end{equation}
        	We then define an auxiliary function $G$ such that
	\begin{equation}
		G(x)=2\left(1+(d-1)ex\right)\log\left(1+(d-1)ex\right)-\left(1+(d-1)x\right)\log\left(1+(d-1)x\right).
	\end{equation}
	Using the above definition, the function $f$ is then simply
	\begin{equation}
		f(e,\{e_c^i\}_{i=0}^{d-1})=\frac{1}{d\log 2}\sum_{i=0}^{d-1}G(e_c^i).
	\end{equation}
	Thus, to find the best mixing bound, we have the following optimization problem
    
    \begin{equation}\label{eq:optimisation}
        \begin{aligned}
        \max_{e,\{e_c^i\}_{i=0}^{d-1}}\quad & e\\
        \textrm{s.t.}\quad
            & f(e,\{e_c^i\}_{i=0}^{d-1})\leq0,\\
            & \sum_{i=0}^{d-1}e_c^i=0,\\
            & 0\leq e\leq \frac{1}{\sqrt{2}},\qquad -\frac{1}{d-1}\leq e_c^i\leq1\quad \forall i\in[d].
        \end{aligned}
    \end{equation}
     In the above, we took the value of $e\leq\frac{1}{\sqrt{2}}$, since this bound on the isotropic $d2dd$ family can be obtained from the quantum Bell inequalities derived in Refs.~\cite{gachechiladze2022quantum, our_prl}. These results use different normalization convention, $\tilde e^k_{j,i}=d\,\mathrm{P}(A\oplus B=k\mid\alpha=j,\beta=i)-1=(d-1)e^k_{j,i}$. Substituting the isotropic biases gives
\begin{equation}
\tilde e^{i\cdot j\oplus k}_{j,i}
=
\begin{cases}
(d-1)e, & k=0,\\
-e, & k\neq0,
\end{cases}
\end{equation}

    Thus, for each fixed $i\in\{0,1\}$, $\omega=e^{\frac{2\pi \mathrm{i}}{d}}$ and each nonzero Fourier mode $l$,
$
\sum_{j=0}^{d-1}\sum_{k=0}^{d-1}
\tilde e^{i\cdot j\oplus k}_{j,i}\omega^{kl}=
d^2e.
$
	 Thus, from the quantum Bell inequalities, we obtain $\sum_{i=0}^{1}|d^2e|^2\leq d^4$, which itself implies $2d^4e^2\leq d^4$, and hence $e\leq1/\sqrt2$. Thus the optimum is not attained at the boundary $e=1$.

    We can use the Karush–Kuhn–Tucker (KKT) approach to characterize the behavior of the optimum point. First, we write the Lagrangian,
	\begin{equation}
			\mathcal{L}(e,\{e_c^i\}_{i=0}^{d-1},\mu,\lambda,\{\alpha_i\}_{i=0}^{d-1},\{\beta_i\}_{i=0}^{d-1})=e+\mu f(e,\{e_c^i\})+\lambda\sum_{i=0}^{d-1}e_c^i + \sum_{i=0}^{d-1} \alpha_i \left( -\frac{1}{d-1}-e_c^i\right) + \sum_{i=0}^{d-1} \beta_i (e_c^i - 1)
	\end{equation}
	where $\mu \geq 0$ and $\alpha_i, \beta_i \geq 0$ are the dual variables for the inequality constraints, and $\lambda$ is the multiplier for the equality constraint. 
    Since $e=0$ is trivial and $e=1$ corresponds to a perfect PR box, which violates IC for every nontrivial channel, the relevant optimum satisfies $0<e<1$. We therefore treat $e$ as an interior variable in the KKT analysis.
    For an optimum, the KKT conditions must be satisfied as follows:
    \begin{align}
        &\frac{\partial\mathcal{L}}{\partial e}=1+\mu \sum_{i=0}^{d-1}\frac{\partial G}{\partial e}=0, \quad\frac{\partial\mathcal{L}}{\partial e_c^i}=\mu G'(e_c^i)+\lambda-\alpha_i+\beta_i=0,\qquad \forall i\in[d], \label{eq:app_KKT_stat}\\
        & f(e,\{e_c^i\}_{i=0}^{d-1})\leq0,\quad \sum_{i=0}^{d-1}e_c^i=0, \quad 0\leq e\leq \frac{1}{\sqrt{2}},\quad -\frac{1}{d-1}\leq e_c^i\leq1\quad \forall i\in[d],\label{eq:app_KKT_primal}\\
        &\mu\geq0, \quad\alpha_i,\beta_i\geq 0 \;\forall i\in[d],\label{eq:app_KKT_dual}\\
        & \mu f(e,\{e_c^i\}_{i=0}^{d-1})=0, \quad\alpha_i\left( \frac{1}{d-1}+e_c^i\right)=\beta_i (e_c^i - 1)=0 \; \forall i\in[d].\label{eq:app_KKT_slack}
    \end{align}
    Here, \cref{eq:app_KKT_stat,eq:app_KKT_primal,eq:app_KKT_dual,eq:app_KKT_slack} refer to the stationary, primal feasibility, dual feasibility and complementary slackness conditions respectively. For convenience, we absorbed the constant prefactor $1/(d\log 2)$ into the multiplier $\mu$. 
    We now consider the feasible points in two cases: those that are strictly interior and boundary ones. We consider the strictly interior points first. For them, we have that $-1/(d-1)<e_c^i<1$, and from \cref{eq:app_KKT_slack}, it follows $\alpha_i=\beta_i=0$ for all $i$. 
	Let us define another function $K(x)=G'(x)=(d-1)[2e\left(1+\log\left(1+(d-1)ex\right)\right)-1-\log\left(1+(d-1)x\right)]$. Using that, we can write the optimality condition for  as
	\begin{equation}\label{eq:optimality_bounds}
		K(e_c^i)=\frac{-\lambda}{\mu}\quad \forall i\in[d].
	\end{equation}
	For an interior optimum with respect to $e$, $\mu$ cannot vanish: if $\mu=0$, then $\partial\mathcal L/\partial e=1\neq0$, contradicting stationarity. Hence $\mu\neq0$ and hence from complementary slackness $f(e,\{e_c^i\})=0$.  Thus, \cref{eq:optimality_bounds} implies that the set of values $\{e_c^i\}_{i=0}^{d-1}$ is such that the function $K$ attains the same value for all of them. One obvious possible way is when all of the biases $e_c^i$ are identical. But since all the biases sum up to zero, it readily implies that all of them are zero, and that is a trivial solution. Thus, to have a non-trivial solution, at least two of the biases should be distinct. Now, considering the derivative of $K$, we notice that it has a unique solution
	\begin{equation}
		K'(x)=(d-1)^2\left(\frac{2e^2}{1+(d-1)ex}-\frac{1}{1+(d-1)x}\right)=0\implies x=\frac{2e^2-1}{(d-1)e(1-2e)}.
	\end{equation}
    Thus, the function $K$ can have only one extreme point, and hence such a function can be identically valued at most at two values of $x$. This implies that at most two of the biases can be distinct. 

    Let $m$ of the biases be equal to $r>0$ and the remaining $d-m$ biases be equal to $r'=-\frac{mr}{d-m}<0.$
    For fixed $m$, feasibility implies
    \begin{equation}
        0<r\leq \frac{d-m}{m(d-1)}.
    \end{equation}
    On the active IC boundary we have
    \begin{equation}\label{eq:app_isobound}
    u(e,m,r):=mG(r)+(d-m)G(r')=0.    
    \end{equation}
    If we differentiate $u$ w.r.t $e$ we have
    \begin{equation}
        \frac{\partial u}{\partial e}=2(d-1)mr\log\left(\frac{1+(d-1)er}{1+(d-1)er'}\right)\geq 0.
    \end{equation}
    For fixed $m$ and $r$, we see that $u$ is strictly increasing in $e$, and hence \cref{eq:app_isobound} defines a unique critical value $e^\ast(m,r)$. Implicit differentiation gives
\[
\frac{\partial e^\ast}{\partial m}
=
-\frac{\partial u/\partial m}{\partial u/\partial e}\bigg|_{e=e^\ast}.
\]
A direct calculation gives
\[
\frac{\partial u}{\partial m}
=
\frac{r-r'}{r'}J(r'),
\qquad
J(x):=G(x)-xG'(x).
\]
Since $r>0$ and $r'<0$, we have $(r-r')/r'<0$. Moreover,
\[
J'(x)=-xG''(x).
\]
If we compute $G''(x)$ we get,
\begin{equation}
    G''(x)=\frac{(d-1)^2(2e^2-1+(2e-1)(d-1)ex)}{(1+(d-1)x)(1+(d-1)ex)}.
\end{equation}
We notice that the numerator in the above expression is linear in $x$ and achieves a maximum value of $2e^2-1$ when $e\geq1/2$ or $e-1$ when $e\leq 1/2$, respectively. Hence for $x\in[-1/(d-1),0]$ and $e\leq1/\sqrt2$, one has $G''(x)\leq0$, and hence $J'(x)\leq0$. Thus, $J$ is nonincreasing on this interval. Since $J(0)=0$ and $r'<0$, it follows that $J(r')\geq0$. Therefore $\partial u/\partial m\leq0$, and since $\partial u/\partial e>0$, we obtain
\[
\frac{\partial e^\ast}{\partial m}\geq0.
\]
Thus $e^\ast(m,r)$ is pointwise nondecreasing in $m$. Since the admissible domain $\left[0,\frac{d-m}{m(d-1)}\right]$
shrinks as $m$ increases, minimizing over $r$ cannot reverse this monotonicity. Hence, the tightest bound is attained for the smallest multiplicity, $m=1$.

    We now consider boundary points. There are two possible types of active boundary constraints:
    \begin{equation}
    e_c^i=1
    \qquad\text{or}\qquad
    e_c^i=-\frac{1}{d-1}.
    \end{equation}

    First, suppose that $e_c^{i_0}=1$ for some index $i_0$. Then the normalization condition gives $ \sum_{i\neq i_0}e_c^i=-1. $
    Since each remaining bias satisfies $e_c^i\geq -1/(d-1)$, the only way this equality can hold is if $ e_c^i=-\frac{1}{d-1},\  \forall i\neq i_0.$ Thus, this boundary case is already of the uniform-error form,
    \begin{equation}
    \{e_c^i\}_{i=0}^{d-1}
    =
    \left\{
    1,
    \underbrace{-\frac{1}{d-1},\dots,-\frac{1}{d-1}}_{d-1\ \mathrm{times}}
    \right\},
    \end{equation}
    up to permutations.

    It remains to consider boundary points for which some biases are equal to $-1/(d-1)$, but no bias is equal to $1$. We claim that such a point cannot be optimal unless all negative biases are equal. Let $s-m $ of the biases be equal to $\frac{-1}{d-1}$. From the KKT conditions for the remaining biases, they can assume at most two distinct values. Thus, the biases look like
    \begin{equation}
    \{e_c^i\}_{i=0}^{d-1}
    =
    \left\{
    \underbrace{-\frac{1}{d-1},\dots,-\frac{1}{d-1}}_{s-m\ \mathrm{times}},\underbrace{r',\dots,r'}_{m\ \mathrm{times}},\underbrace{r,\dots,r}_{d-s\ \mathrm{times}}
    \right\},
    \end{equation}
    where $r>0,r'<0$ as before. Let $x_1,\dots,x_s<0$ be the negative channel biases. We remind ourselves that for $e\leq1/\sqrt2$, we have $ G''(x)\leq0$ for all $x\in\left[-\frac{1}{d-1},0\right]$,
    so $G$ is concave on the negative part of the feasible interval. Replacing each of the $s$ biases with their average
    \begin{equation}
    \bar x=\frac{1}{s}\sum_{\ell=1}^s x_\ell,
    \end{equation}
    preserves the normalization condition and remains feasible, since
    $-\frac{1}{d-1}\leq \bar x<0$. By concavity of G,
    \begin{equation}
    \sum_{\ell=1}^s G(x_\ell)
    \leq
    sG(\bar x).
    \end{equation}
    This implies that 
    \begin{equation}
        f(e,\{x_1,\dots,x_s,r,\dots,r\})\leq f(e,\{\bar x,\dots,\bar x,r,\dots,r\}).
    \end{equation}
     Therefore, this averaging operation does not decrease $f=\sum_{i=0}^{d-1}G(e_c^i)$, and any bound on $e$ must be lower than the bound obtained from the original assignment since $f$ is increasing. Thus, at an optimal boundary point, all negative channel biases may be taken to be equal. In doing so, if a subset of them assume a boundary value of $\frac{-1}{(d-1)}$, the averaging procedure will always result in an interior point (unless all the negative biases are exactly $\frac{-1}{(d-1)}$, which reduces to the uniform-error channel case). 

    Consequently, any optimal boundary candidate reduces to a two-valued channel-bias pattern,
    \begin{equation}
    \left\{
    \underbrace{r,\dots,r}_{m\ \mathrm{times}},
    \underbrace{r',\dots,r'}_{d-m\ \mathrm{times}}
    \right\},
    \qquad
    r>0,\quad r'<0.
    \end{equation}
    This is precisely the two-valued case analyzed above. Since the critical value $e^\ast(m,r)$ is nondecreasing in $m$, the tightest bound is obtained for the smallest possible multiplicity, namely $m=1$.
    Thus, we obtain the uniform-error channel used in \cite{miklin2021information}, and the claim follows.

\newpage
\twocolumngrid 
\bibliography{bibliography.bib}

\end{document}